\theoremstyle{plain}
\newtheorem{thm}{Theorem$\!$}[section]
\newenvironment{theorem}
{\begin{thm}\hspace*{-1ex}{\bf.}}{\end{thm}}
\newtheorem{lem}{Lemma$\!$}[section]
\newenvironment{lemma}{\begin{lem}\hspace*{-1ex}{\bf.}}{\end{lem}}
\newtheorem{claimm}{Claim$\!$}[section]
\newtheorem{prop}{Proposition$\!$}[section]
\newtheorem{cor}{Corollary$\!$}[section]
\newenvironment{corollary}{\begin{cor}\hspace*{-1ex}{\bf.}}{\end{cor}}
\newtheorem{defn}{Definition$\!$}[section]
\newenvironment{definition}{\begin{defn}\hspace*{-1ex}{\bf.}}{\end{defn}}
\newtheorem{xmpl}{Example$\!$}[section]
\newtheorem{cnstr}{Construction$\!$}[section]
\newtheorem{fac}{Fact$\!$}[section]
\newenvironment{itemizei}{%
\begin{list}
  {\labelitemi}{ \leftmargin3em\labelwidth.5em\def\makelabel##1{\it{##1}}}}
{\end{list}}
\newcounter{enumrom}
\renewcommand{\theenumrom}{(roman{enumrom})}
\renewcommand{\@endtheorem}{\endtrivlist}
\renewcommand{\thefigure}{{\@arabic\c@figure}}
\renewcommand{\fnum@figure}{{\bf Figure\,\thefigure}}
\def\multiset#1#2{\ensuremath{\left(\kern-.3em\left(\genfrac{}{}{0pt}{}{#1}{#2}\right)\kern-.3em\right)}}
\date{}
\begin{document}

\title{A characterization of the number of subsequences obtained via the deletion channel}

\author{
\authorblockN{Y. Liron}
\authorblockA{The Open University of Israel\\
{\tt yuvalal@gmail.com}\vspace*{-4.0ex}}
\and
\authorblockN{M. Langberg}
\authorblockA{The Open University of Israel \\
{\tt mikel@openu.ac.il}\vspace*{-4.0ex}}
}

\date{}
\maketitle
\begin{abstract}
Motivated by the study of {\em deletion channels}, this work presents improved bounds on the number of subsequences obtained from a binary sting $X$ of length $n$ under $t$ deletions.
It is known that the number of subsequences in this setting strongly depends on the number of {\em runs} in the string $X$; where a run is a maximal sequence of the same character.
Our improved bounds are obtained by a structural analysis of the family of $r$-run strings $X$, an analysis in which we identify the {\em extremal} strings with respect to the number of subsequences.
Specifically, for every $r$, we present $r$-run strings with the minimum (respectively maximum) number of subsequences under any $t$ deletions; and perform an exact analysis of the number of subsequences of these extremal strings.
\end{abstract}

\section{Introduction}

Let $X \in \{0,1\}^n$ be a binary string of length $n$, and let $t \leq n$ be a parameter.
In this work, we study the size of the set $D_t(X)$ of subsequences of $X$ that can be obtained from $X$ via $t$ deletions.
The set $D_t(X)$ and its size play a major role in the design and analysis of communication schemes over {\em deletion channels}, i.e., channels in which characters of the transmitted codeword may be deleted,  \cite{Lev66,Lev01,MercierPHD,Mitz09,MBT10,Mitz11}.

The analysis of $D_t(X)$ is challenging as the number of subsequences of a string $X$ obtained by deletions does not depend only on its length $n$ and the number $t$ of deletions, but also strongly depends on its structure.
For example, $D_t(0^n)$ is of size 1 and equals the single string $0^{n-t}$, while there exist strings $X$ for which $D_t(X)$ is of size $\exp(\Omega(n-t))$.
Clearly, $|D_t(X)|$ is at most $2^{n-t}$ (as after $t$ deletions we remain with a binary string of length $n-t$).

In his work from 1966, Levenshtein \cite{Lev66} shows (as described in  \cite{Lev01}) that the number of subsequences $|D_t(X)|$ strongly depends on the number of {\em runs} in the string.
Here, a run is a maximal sequence of the same character, and the number of runs in a given string is denoted $r(\cdot)$. 
For example $r(0^n)=1$ while $r(0101\dots 01)=n$. 
Specifically, Levenshtein \cite{Lev66} proves that
$$
\binom{r(X)-t+1}{t}\le |D_t(X)| \le \binom{r(X)+t-1}{t}.
$$
Bounding $|D_t(X)|$ is addressed by Calabi and Hartnett \cite{Cal69}, which show that the maximal number of subsequences is obtained from certain strings $X$, denoted cyclic strings $C_n$, in which $r(X)=|X|$.
\cite{Cal69} devise a recursive expression for $|D_t(C_n)|$, to obtain the bound 
$$
\binom{r(X)-t+1}{t} \le |D_t(X)| \le  |D_t(C_n)|.
$$  
Relatively recently, Hirschberg and Regnier \cite{Hirsch} revisit the analysis of \cite{Cal69} and obtain an explicit upper bound together with an improved lower bound of the form  
$$
\sum_{i=0}^t \binom{r(X)-t}i \le |D_t(X)| \le \sum_{i=0}^t \binom{n-t}i.
$$  
Mercier et al. \cite{Bhar} study the setting of small values for $t$, and present explicit formulas for $D_t(X)$ for $t\le 5$.
However for general values of $t$ the problem remains open.
Several of the results above generalize also to arbitrary alphabets.

The bounds of \cite{Lev66,Cal69,Hirsch} are depicted in Figure~\ref{fig:previous} for the case $n=120$ and $r=r(X)=24$ as a function of $t$.
The lower bounds of both \cite{Hirsch} and \cite{Lev66} depend on the number of runs $r(X)$; and it holds that the lower bound of \cite{Hirsch} is superior (i.e., larger) to that of \cite{Lev66}.  
The upper bound of \cite{Lev66} depends on $r(X)$, while that of \cite{Cal69,Hirsch} does not.
Thus each bound is stronger (i.e., smaller) for certain settings of parameters $r$ and $t$.
Roughly speaking, the upper bounds of \cite{Cal69,Hirsch} are stronger than those of \cite{Lev66} for large values of $r$ and $t$; while the opposite is true for small $r$ and $t$.

\begin{figure}[t]
\centering
\includegraphics[scale=0.6]{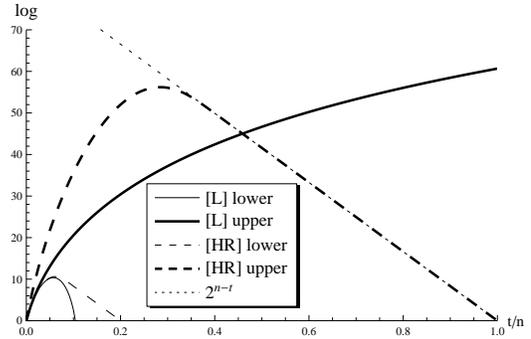}
\caption{Previous bounds on $|D_t(X)|$. [L] marks the bounds proven by Levenstein \cite{Lev66}, and [HR] marks the bounds by Hirschberg et. al \cite{Hirsch}. Also plotted is the naive bound $2^{n-t}$ which is the possible number of binary strings of length $n-t$. This graph shows an example for the case $n=120$, and $r=24$. All graphs are shown on a logarithmic scale.}\label{fig:previous}
\end{figure}

\subsection{Our results and proof techniques}

In this work, we continue the study of $D_t(X)$ and present improved upper and lower bounds to those described above.
Our analysis is two fold.
We start by studying the family of strings $X$ for which $r=r(X)$, and identify the {\em extremal} strings in this family with respect to the number of subsequences.
Specifically, for every $r$, we identify two $r$-run strings, referred to as the {\em balanced} $r$-run string $B_r$ and the {\em unbalanced} $r$-run string $U_r$ such that for every $X$ it holds that 
\begin{equation}
\label{eq:UB}
|D_t(U_{|r(X)|})| \le |D_t(X)| \le |D_t(B_{|r(X)|})|. 
\end{equation}
Loosely speaking, the string $U_r=0101 \dots 01^{n-r+1}$ is the $r$-run string in which each run is exactly of size $1$, except the last run which is of size $n-t+1$, and is thus referred to as `unbalanced' (in the run lengths).
The balanced string $B_r=0^{n/r}1^{n/r}0^{n/r}1^{n/r} \dots 1^{n/r}0^{n/r}$ is the $r$-run string in which each and every run is of equal length $n/r$.

To obtain Equation~(\ref{eq:UB}), we show that any $r$-run string $X$ can be transformed into the string $U_r$ (alternatively $B_r$) via a series of operations that are monotonic with respect to the number of subsequences.
The modifications we study include a {\em balancing} operation, in which given $X$ we shorten the length of one of its runs while increasing the length of another; a {\em flipping} operation, in which a prefix or suffix of $X$ is replaced by it complement; and an {\em insertion} operation in which characters are added to $X$ (see Figures~\ref{fig:insertion},~\ref{fig:flip} and \ref{fig:balance}). 
A delicate combination of these (and other) operations enable us to establish Equation~(\ref{eq:UB}).
The modifications we study and their analysis shed light on the properties of binary strings under the deletion operation and may be of independent interest. 
We note that for the extreme case of $r=n$, our unbalanced string $U_n$ is exactly the cyclic string $C_n$; thus we are consistent with the result of \cite{Cal69}.

\begin{figure}[t] 
\centering
\includegraphics[scale=0.6]{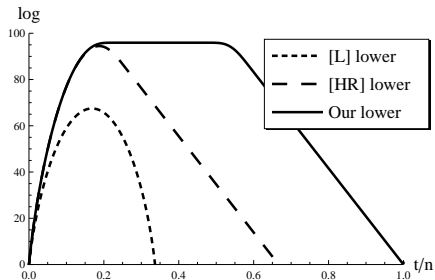}
\caption{Comparision of lower bounds. Our lower bound based on unbalanced strings [Theorem \ref{t:lowerbound}], compared to the previous known bounds. [L] marks the lower bound proven by Levenstein \cite{Lev66}. [HR] marks the lower bound proven by Hirschberg et. al \cite{Hirsch}. This graph shows an example for the case $n=300$, and $r=200$. The logarithmic presentation emphasizes that we obtain an {\em exponential} multiplicative improvement.}
\label{fig:compareBoundsL}
\end{figure}

We then turn to obtain analytic expressions for $|D_t(U_r)|$ and $|D_t(B_r)|$ of Equation~(\ref{eq:UB}).
Our expressions are at least as good as previous bounds in \cite{Lev66,Cal69,Hirsch} as they are based on specific $r$-run strings ($U_r$ and $B_r$), and for a large range of parameters our bounds are strictly tighter.
For our improved lower bound, we devise a recursive expression for $|D_t(U_r)|$ and present a closed form formula for its evaluation.
We then perform an asymptotic evaluation of $|D_t(U_r)|$ (assuming large $r$).
A comparison of our improved lower bound with that previously known is depicted in Figure~\ref{fig:compareBoundsL}.
Specifically, we show that for values of $t$ which are greater than $r/3$ our lower bound improves on those previously known by an exponential multiplicative factor of roughly $2^{t-r/3}$.

To address our improved upper bounds, we first present a recursive formula for the computation of $|D_t(B_r)|$.
We then extract a closed form solution to our recursive definition which yields an exact expression for $|D_t(B_r)|$. 
For example, a numerical comparison of $|D_t(B_r)|$ with the upper bounds previously known is depicted in Figure~\ref{fig:compareBoundsH} for the value of $n=120$ and $r=24$ as a function of $t$.
We note that the expression we obtain for $|D_t(B_r)|$ involves several summations of certain combinatorial expressions.
An asymptotic analysis of our expression is left open in this work and is subject to future research. 

\begin{figure}[t] 
\centering
\includegraphics[scale=0.6]{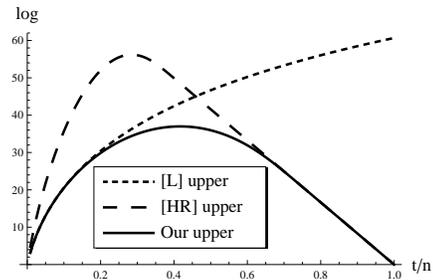}
\caption{Comparison of upper bounds. Our upper bounds based on balanced strings [Corollary \ref{c:subseqB}], compared to the previous best known bounds.  [L] marks the upper bound proven by Levenstein \cite{Lev66}. [HR] marks the upper bound proven by Hirschberg et. al \cite{Hirsch}. This graph shows an example for the case $n=120$ and $r=24$ as a function of $t$ (in logarithmic scale).} \label{fig:compareBoundsH}
\end{figure}

%
 
\subsection{Structure}
The remainder of the paper is organized as follows.
In section~\ref{sec:tools} we present the set of structural operations and tools we use for comparing and bounding the number of subsequences obtained via deletion. This section includes our balancing, flipping, and insertion modifications.
In Section~\ref{sec:balanced}, we study our first family of {\em balanced} strings, and show  that (for any given number of runs $r$ and deletions $t$) they have the largest number of subsequences under deletion.
In Section~\ref{sec:upper}, we analyze the number of subsequences of balanced strings 
and in such obtain our upper bound.
In Section~\ref{sec:unbalanced}, we present our second family of {\em unbalanced} strings, and prove that they have the least number of subsequences under any number of deletions $t$.
We prove our lower bound by analyzing the number of subsequences of unbalanced strings in Section~\ref{sec:lower}. 
Finally, in Section~\ref{sec:dp}, we study the connection between subsequences and the closely related notion of {\em deletion patters}. 
Using this connection, we show exponential multiplicative gaps between our improved upper bound and those previously presented.


\section{Tools for analyzing the number of subsequences}
\label{sec:tools}

The number of subsequences of a string obtained by deletions highly depends on the string's structure. In order to determine the number of subsequences for a given number of deletions, it is not enough to know the length of the string, and not even the number of the string's runs. Inspired by previous works, we looked for tools that will enable us to analyze the number of subsequences. In this section we present these tools. In subsection \ref{ssec:partition} we present a method of counting the number of subsequences by partitioning the set of subsequences into subsets characterized by their prefix, thus forming a recursive relation. In subsection \ref{ssec:operations} we present basic operations on strings that always increase (or decrease) the number of subsequences under deletion. Such basic operations allow comparison between the number of subsequences of strings, and are very useful for finding bounds on the number of subsequences.

$S(x_1,\dots,x_r)$ denotes a binary string with $r$ runs, in which the $i^{th}$ run is of length $x_i$ and the first symbol is $0$, E.g. $S(1,2,3)=011000$. 
We will use the notation $n\times a$ to indicate $n$ sequential runs of length $a$, E.g. $S(2,3\times 1,2)=S(2,1,1,1,2)=0010100$.
$D_t(x_1,\dots,x_r)$ will be used as short form for $D_t(S(x_1,\dots,x_r))$.
$C_n$ denotes the binary cyclic string $S(n\times 1$).
We assume the following conventions: $\sum_{i=j}^k a_i = 0$ when $j>k$. $\binom n i=0$ when $i<0$ or $i>n$. $|D_t(X)=1|$ for $t=|X|$ and $t=0$, and $|D_t(X)=0|$ for $t>|X|$.

\subsection{Partitioning the set of subsequences}\label{ssec:partition}
We found the following lemma (from \cite{Hirsch}) very useful. We restate it here and derive a corollary for binary strings.
\begin{lemma}\label{l:partition}
\cite{Hirsch} For any $\Sigma$-string $X$:
\begin{itemize}
\item[(i)] $D_t(X)=\sum_{a\in \Sigma}{D_t^{(a)}(X)}$, where for a set $G$ of strings $G^{(a)}$ denotes all members of $G$ starting with $a$.
\item[(ii)] $D_t^{(a)}(X)=aD_{t+1-f(a)}(X[f(a)+1:n])$, where $f(a)$ denotes the index of the first appearance of $a$ in $X$, and $X[i:j]$ denotes the substring $x_i\dots x_j$ of $X$.
\end{itemize}
\end{lemma}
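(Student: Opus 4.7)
The plan is to handle the two parts separately, with part (i) being essentially a trivial partition statement and part (ii) amounting to a bijection argument based on a canonical parsing of a subsequence.

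For part (i), I would simply observe that in the non-trivial regime $t<n$ every element of $D_t(X)$ is a non-empty string over $\Sigma$, and hence begins with some unique character $a \in \Sigma$. The sets $D_t^{(a)}(X)$ for $a\in \Sigma$ are pairwise disjoint (a string cannot start with two different characters) and their union is $D_t(X)$. This gives the claimed equality immediately.

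For part (ii), the goal is to exhibit a bijection
\[
\varphi : D_{t+1-f(a)}\bigl(X[f(a)+1 : n]\bigr) \longrightarrow D_t^{(a)}(X), \qquad y \mapsto a y.
\]
The easy direction is to check $\varphi$ is well-defined: given $y \in D_{t+1-f(a)}(X[f(a)+1:n])$, one realizes $ay$ as a subsequence of $X$ by deleting the first $f(a)-1$ characters of $X$, keeping the character at position $f(a)$ (which equals $a$ by definition of $f(a)$), and applying to the suffix $X[f(a)+1:n]$ the deletion pattern that produces $y$. The total number of deletions is $(f(a)-1)+(t+1-f(a)) = t$, so $ay \in D_t^{(a)}(X)$, and clearly $\varphi$ is injective since one can strip off the leading $a$.

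The only step that requires care, and is the main (mild) obstacle, is surjectivity: given $a y \in D_t^{(a)}(X)$, I need to argue $y \in D_{t+1-f(a)}(X[f(a)+1:n])$. Here the point is that $ay$ may be realized by many different patterns of $t$ deletions in $X$, and one must select a canonical one that uses position $f(a)$ for the leading $a$. By definition $f(a)$ is the smallest index with $X[f(a)] = a$, so every occurrence of $a$ in $X$ sits at a position $j \geq f(a)$; therefore, whichever occurrence $j$ of $a$ is used in some realization of $ay$, we may re-route the realization by instead deleting positions $1,\dots, f(a)-1$ and keeping position $f(a)$, producing the same subsequence $ay$. In this canonical realization the suffix $y$ is obtained from $X[f(a)+1:n]$ via exactly $t-(f(a)-1) = t+1-f(a)$ deletions, placing $y$ in the required set. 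Taking $|\varphi^{-1}|$ of both sides and identifying $D_t^{(a)}(X)$ with $a \cdot D_{t+1-f(a)}(X[f(a)+1:n])$ gives the stated identity.
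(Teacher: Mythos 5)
Your proof is correct. Note that the paper itself gives no argument for this lemma -- it is imported verbatim from \cite{Hirsch} -- so there is no internal proof to compare against; your write-up supplies the standard argument that the cited source relies on. Part (i) is indeed just the observation that, for $t<n$, the sets $D_t^{(a)}(X)$ partition $D_t(X)$ by first character (and your restriction to the regime $t<n$ is the right caveat, since at $t=n$ the empty string starts with no character). Part (ii) is the canonical-realization argument: the map $y\mapsto ay$ is well defined and injective by the deletion count $(f(a)-1)+(t+1-f(a))=t$, and surjectivity follows because any realization of $ay$ must delete everything before the position $j\ge f(a)$ supplying the leading $a$, so $y$ is a subsequence of $X[j+1:n]$, hence of $X[f(a)+1:n]$, and the number of deletions there is forced by lengths to be $(n-f(a))-(n-t-1)=t+1-f(a)$ (which is automatically nonnegative). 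Two cosmetic remarks: the length-counting observation just made lets you skip the explicit re-routing narrative, and your closing sentence about ``taking $|\varphi^{-1}|$ of both sides'' is unnecessary -- the lemma asserts a set identity, and the two inclusions you already established (well-definedness and surjectivity of $\varphi$) give exactly that.
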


We derive the following lemma for binary strings.

\begin{lemma}\label{l:binSplit}
\mbox{}
\begin{itemize}
\item [(i)] For any binary string $X$, s.t. $X=\sigma^i \epsilon^j Y$ for some $i,j>0$ and $Y\in\{\sigma,\epsilon\}^*$, $|D_t(X)|=|D_t(\sigma^{i-1}\epsilon^j Y)|+|D_{t-i}(\epsilon^{j-1} Y)|$ for any $t<|X|$. 
\item [(ii)] Symmetrically, $|D_t(Y\epsilon^j\sigma^i)|=|D_t(Y\epsilon^j\sigma^{i-1})|+|D_{t-i}(Y\epsilon^{j-1})|$.
\end{itemize}
\end{lemma}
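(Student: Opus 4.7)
The plan is to apply Lemma \ref{l:partition} directly, specializing it to the binary alphabet $\{\sigma,\epsilon\}$. For part (i), because $X=\sigma^i\epsilon^j Y$ with $i,j>0$, the first occurrence of $\sigma$ in $X$ is at position $f(\sigma)=1$, and the first occurrence of $\epsilon$ is at position $f(\epsilon)=i+1$. Plugging these two values into Lemma~\ref{l:partition}(ii), I expect to obtain
\begin{align*}
D_t^{(\sigma)}(X) &= \sigma\cdot D_t(X[2:n]) \;=\; \sigma\cdot D_t(\sigma^{i-1}\epsilon^j Y),\\
D_t^{(\epsilon)}(X) &= \epsilon\cdot D_{t+1-(i+1)}(X[i+2:n]) \;=\; \epsilon\cdot D_{t-i}(\epsilon^{j-1} Y).
\end{align*}

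Next, invoking Lemma~\ref{l:partition}(i) and observing that the two sets above are disjoint (their elements begin with distinct symbols), I take cardinalities and read off exactly the identity claimed in part~(i). The hypothesis $t<|X|$ enters at this step only to guarantee that every subsequence in $D_t(X)$ has length at least one, and therefore has a well-defined first symbol; this is precisely what makes the partition into $D_t^{(\sigma)}(X)$ and $D_t^{(\epsilon)}(X)$ both disjoint and exhaustive.

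For part~(ii), I would exploit the bijection $Z\mapsto Z^R$ between $D_t(Z)$ and $D_t(Z^R)$, which shows that the number of subsequences is invariant under reversal of the host string. Reversing $Y\epsilon^j\sigma^i$ yields $\sigma^i\epsilon^j Y^R$, a string of the form already handled by part~(i); applying part~(i) to this reversal and then reversing each of the three substrings appearing in the resulting identity gives the symmetric statement.

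I do not foresee any real obstacle: the lemma is essentially a transcription of Lemma~\ref{l:partition} into a form that exposes the first two runs of $X$ and converts the partition of $D_t(X)$ into a concrete two-term recursion. The only subtlety worth naming is the edge case $t=|X|$, where $D_t(X)$ consists solely of the empty subsequence and a partition by first symbol is not meaningful — exactly the case ruled out by the hypothesis $t<|X|$.
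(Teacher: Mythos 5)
Your proposal is correct and follows essentially the same route as the paper: specialize Lemma~\ref{l:partition} with $f(\sigma)=1$ and $f(\epsilon)=i+1$, take cardinalities of the resulting disjoint partition, and dispose of part (ii) by symmetry (your explicit reversal bijection is just a spelled-out version of the paper's ``identical'' symmetric argument). Your remark on why $t<|X|$ is needed is a welcome extra detail that the paper leaves implicit.
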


\begin{proof}
(i) Following the notation of Lemma \ref{l:partition}, $f(\sigma)=1$ and $f(\epsilon)=i+1$. Using Lemma \ref{l:partition}(ii), $D_t^{(\sigma)}=\sigma D_{t+1-1}(X[2:n])$ and $D_t^{(\epsilon)}=\epsilon D_{t+1-(i+1)}(X[i+2:n])$. Applying Lemma \ref{l:partition}(i) we get the result.

(ii) The proof for the symmetric case is identical.
\end{proof}

Applying Lemma \ref{l:binSplit} repeatedly, we get the following lemma.
\begin{lemma}\label{l:binSplitSum}
For any binary string $S(x_1,\dots,x_r)$, s.t. $n=\sum_{i=1}^r x_i$:
\begin{itemize}
\item[(i)] $|D_t(x_1,\dots,x_r)|=|D_t(x_2,\dots,x_r)|+\sum_{i=1}^{x_1}|D_{t-i}(x_2-1,x_3,\dots,x_r)|+1|_{t>n-x_1}$.
\item[(ii)] Symmetrically, $|D_t(x_1,\dots,x_r)|=|D_t(x_1,\dots,x_{r-1})|+\sum_{i=1}^{x_r}|D_{t-i}(x_1,\dots,x_{r-2},x_{r-1}-1)|+1|_{t>n-x_r}$.
\end{itemize}
\end{lemma}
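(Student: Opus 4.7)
The plan is to derive part (i) by iterating Lemma~\ref{l:binSplit}(i) a total of $x_1$ times, peeling one character off the leading $0$-run at each step; part (ii) will then follow by the symmetric argument using Lemma~\ref{l:binSplit}(ii) (equivalently, by reversing the string and reusing part (i)).

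Writing $S(x_1,\dots,x_r)=0^{x_1}\,1^{x_2}\,Y$ with $Y=S(x_3,\dots,x_r)$, I set $g(k):=|D_t(k,x_2,\dots,x_r)|$ and invoke Lemma~\ref{l:binSplit}(i) with $\sigma=0,\epsilon=1$. Together with the $0/1$-symmetry of $|D_t(\cdot)|$, which lets me re-express any binary string in canonical ``starts-with-$0$'' form, the lemma supplies the one-step identity
\[
g(k)\;=\;g(k-1)\;+\;|D_{t-k}(x_2-1,x_3,\dots,x_r)|,
\]
valid whenever the length condition $t<|S(k,x_2,\dots,x_r)|=k+(n-x_1)$ holds. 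I will then telescope this recursion downward from $k=x_1$ to the smallest admissible $k$ and identify the resulting boundary term explicitly.

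The main obstacle is precisely the side condition $t<|X|$ required by Lemma~\ref{l:binSplit}; the recursion breaks exactly when $k+(n-x_1)$ drops down to $t$. If $t\le n-x_1$ the recursion is valid for every $k\in\{1,\dots,x_1\}$, and telescoping reaches the base $g(0)=|D_t(0,x_2,\dots,x_r)|=|D_t(x_2,\dots,x_r)|$ (a length-$0$ run disappears and $0/1$-symmetry brings the string back into canonical form), which matches the stated formula with $1|_{t>n-x_1}=0$. If $t>n-x_1$, I stop the telescope at $k^{*}:=t-(n-x_1)\in[1,x_1]$, where the conventions force $g(k^{*})=1$ (since $|S(k^{*},x_2,\dots,x_r)|=t$) and $g(k)=0$ for $k<k^{*}$; unrolling then produces $g(x_1)=1+\sum_{k=k^{*}+1}^{x_1}|D_{t-k}(x_2-1,x_3,\dots,x_r)|$.

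To reconcile this with the stated identity in the large-$t$ case, I observe that $|D_t(x_2,\dots,x_r)|=0$ (its argument has length $n-x_1<t$) and that every term with index $i\le k^{*}$ in $\sum_{i=1}^{x_1}|D_{t-i}(x_2-1,x_3,\dots,x_r)|$ likewise vanishes by the same length bound, so the full sum collapses to $\sum_{i=k^{*}+1}^{x_1}$; the indicator $1|_{t>n-x_1}$ then exactly absorbs the ``boundary $1$'' that the truncated telescope leaves behind. Combining both regimes completes part (i); part (ii) is obtained verbatim by reflection.
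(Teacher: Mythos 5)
Your proof is correct and takes essentially the same route as the paper's: you iterate Lemma~\ref{l:binSplit} to peel the leading run one symbol at a time, split on whether $t\le n-x_1$ or $t>n-x_1$, and identify the leftover boundary term with the indicator $1|_{t>n-x_1}$ while the truncated summands vanish by the stated conventions. The only cosmetic difference is that you make the $0/1$-symmetry step and the vanishing of the terms with $i\le k^{*}$ explicit, which the paper leaves implicit.
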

\begin{proof}
(i) We denote $n=\sum_{i=1}^r x_i$. 
Using Lemma \ref{l:binSplit} once, we get $|D_t(x_1,\dots,x_r)|=|D_t(x_1-1,x_2,\dots,x_r)|+|D_{t-x_1}(x_2-1,x_3,\dots,x_r)|$. For $x_1>1$ we can use Lemma \ref{l:binSplit} again and get  $|D_t(x_1,\dots,x_r)|=|D_t(x_1-2,x_2,\dots,x_r)|+|D_{t-x_1+1}(x_2-1,x_3,\dots,x_r)|+|D_{t-x_1}(x_2-1,x_3,\dots,x_r)|$. Likewise, for $j\le \min(x_1,n-t)$, applying Lemma \ref{l:binSplit} $j$ times yields $|D_t(x_1,\dots,x_r)|=|D_t(x_1-j,x_2,\dots,x_r)|+\sum_{i=x_1-j+1}^{x_1}|D_{t-i}(x_2-1,x_3,\dots,x_r)|$. 
When $t\le n-x_1$, it follows that $\min(x_1,n-t)=x_1$, and so we can expand using Lemma \ref{l:binSplit} exactly $x_1$ times to get  $|D_t(x_1,\dots,x_r)|=|D_t(x_2,\dots,x_r)|+\sum_{i=1}^{x_1}|D_{t-i}(x_2-1,x_3,\dots,x_r)|$. 
When $t>n-x_1$ after expanding $n-t$ times, we get the expression $|D_t(x_1,\dots,x_r)|=|D_t(x_1-(n-t),x_2,\dots,x_r)|+\sum_{i=x_1-(n-t)+1}^{x_1}|D_{t-i}(x_2-1,x_3,\dots,x_r)|$. 
As $|S(x_1-(n-t),x_2,\dots,x_r)|=t$, and noticing that for $t>|X|$, $|D_t(X)|=0$, we get the lemma's claim.

(ii) The proof for the symmetric case is identical.
\end{proof}

\subsection{Basic operations on strings}\label{ssec:operations}
\begin{figure}[t]
  \begin{center}
   \subfigure[Insertion]{\label{fig:insertion}\includegraphics[scale=0.55]{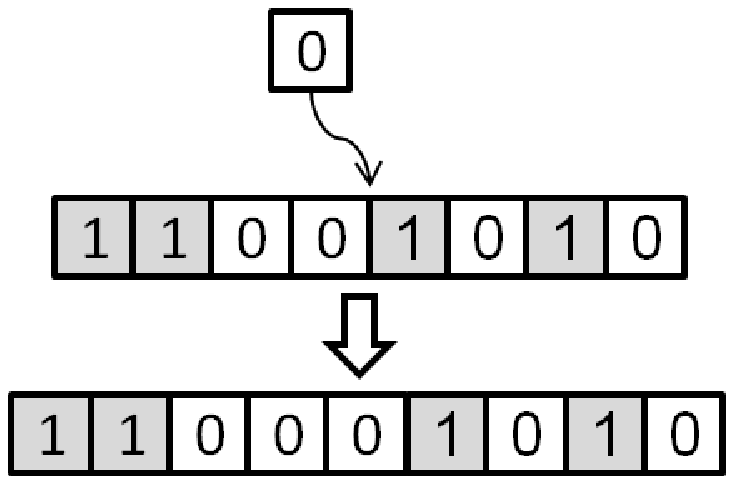}}
   \subfigure[Flip]{\label{fig:flip}\includegraphics[scale=0.52]{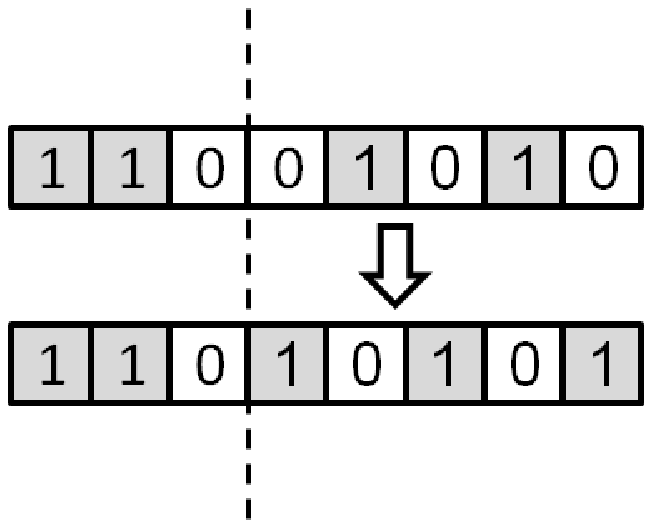}} 
   \subfigure[Balance]{\label{fig:balance}\includegraphics[scale=0.55]{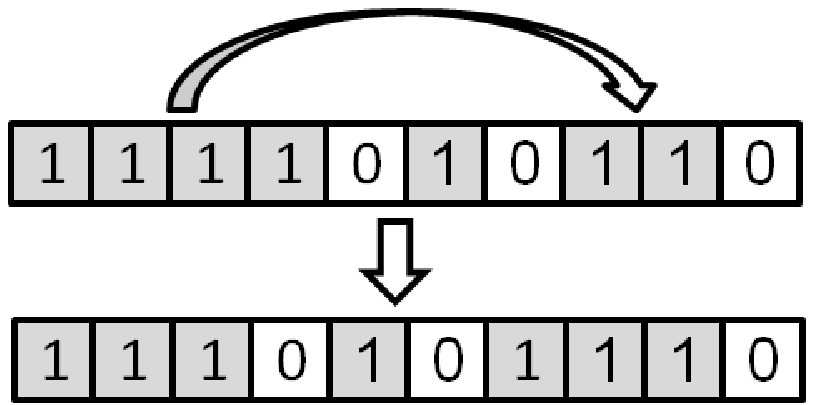}}
  \end{center}
  \caption{Basic operations on strings. In all diagrams the lower string has more subsequences under any number of deletions.}
  \label{fig:basicOp}
\end{figure}
In the following sections we will present families of strings, for which the number of subsequences can be explicitly calculated. In order to use these families of strings to devise bounds on the number of subsequences for general strings, we use basic operations on strings, which allows us to transform one string into another, while monotonically increasing (or decreasing) the number of their subsequences. In this section we list such basic operations.

\subsubsection {Insertion operation [Figure \ref{fig:insertion}]}
Hirschberg et al.\cite{Hirsch} showed that inserting a symbol anywhere in the middle of a string always increases the number of subsequences. 
\begin{lemma}\label{l:insertion}[Insertion increases the number of subsequences]
\cite{Hirsch} For any $\Sigma$-strings $U$, $V$ and any $\sigma \in \Sigma$, $|D_t(UV)| \le |D_t(U\sigma V)|$. 
\end{lemma}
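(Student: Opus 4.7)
The plan is to exhibit an explicit injection $f : D_t(UV) \to D_t(U\sigma V)$. Since $|U\sigma V| = |UV|+1$, elements of $D_t(UV)$ have length $|UV|-t$ while elements of $D_t(U\sigma V)$ have length $|UV|+1-t$, so $f$ must send each subsequence $w$ to a subsequence obtained by inserting $\sigma$ at a canonically chosen position that depends on $w$, $U$, and $V$.

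Given $w\in D_t(UV)$, I let $u'$ denote the longest prefix of $w$ that is a subsequence of $U$, and write $w = u'v'$ for the corresponding decomposition. The map is $f(w) := u'\sigma v'$. The first task is to verify $f(w) \in D_t(U\sigma V)$, which reduces to showing $v'$ is a subsequence of $V$. For this, pick any embedding of $w$ into $UV$; it yields a split $w = \tilde u \tilde v$ with $\tilde u$ a subsequence of $U$ and $\tilde v$ a subsequence of $V$. By maximality of $u'$ one has $|\tilde u|\le |u'|$, so $v'$ is a suffix of $\tilde v$ and hence a subsequence of $V$. Concatenating the embedding of $u'$ into $U$, the literal $\sigma$, and the embedding of $v'$ into $V$ exhibits $f(w)$ as a subsequence of $U\sigma V$ of the correct length $|UV|+1-t$.

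The main work is injectivity. Suppose $f(w_1)=f(w_2)=s_1\dots s_m$, and let the inserted $\sigma$ sit at position $k_i+1$, where $k_i=|u_i'|$. Assume without loss of generality $k_1 \le k_2$, and suppose for contradiction $k_1 < k_2$. Then $u_2' = s_1\dots s_{k_2}$ is a subsequence of $U$, and I want to deduce that $s_1\dots s_{k_1}s_{k_1+2}$ (the length-$(k_1+1)$ prefix of $w_1$) is also a subsequence of $U$, which will violate the maximality of $u_1'$. If $k_1+2\le k_2$, this prefix is obtained from the prefix $s_1\dots s_{k_1+2}$ of $u_2'$ by deleting the middle character $s_{k_1+1}$, which preserves the subsequence property. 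In the edge case $k_1+1=k_2$ one has $s_{k_1+2}=s_{k_2+1}=\sigma$, so $s_1\dots s_{k_1}s_{k_1+2} = s_1\dots s_{k_1+1} = u_2'$, again a subsequence of $U$. Hence $k_1=k_2$, which forces $u_1'=u_2'$, $v_1'=v_2'$, and thus $w_1=w_2$.

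The main obstacle I anticipate is precisely this injectivity step: because $\sigma$ can already occur many times inside $UV$, a single string $u'\sigma v'$ admits multiple parsings as "prefix, inserted $\sigma$, suffix," so the map is not obviously invertible. Anchoring $u'$ as the longest prefix of $w$ embedding into $U$ is what pins down a unique parse, and the short case analysis on the gap $k_2-k_1$ above is what converts this maximality into injectivity, using only the elementary fact that removing a character from a subsequence leaves a subsequence.
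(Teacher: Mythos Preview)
The paper does not supply its own proof of this lemma; it is stated with a citation to Hirschberg and Regnier and used as a black box. Your argument is a correct, self-contained proof: the map $w\mapsto u'\sigma v'$ with $u'$ the longest prefix of $w$ embeddable in $U$ is well-defined (your splitting argument shows $v'$ embeds in $V$), and the injectivity step is sound, including the edge case $k_1+1=k_2$ where both inserted characters coincide and force $s_1\dots s_{k_1}s_{k_1+2}=u_2'$. So there is nothing to compare against in the present paper; you have simply filled in what the authors left to the reference.
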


\subsubsection {Deletion chain rule}
\begin{lemma}\label{l:associativity}
For any  $\Sigma$-string $U$, and any $V\in D_t(U)$, $D_{t'}(V)\subseteq D_{t+t'}(U)$.
\end{lemma}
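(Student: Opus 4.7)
The plan is to prove the inclusion $D_{t'}(V)\subseteq D_{t+t'}(U)$ directly from the definition by exhibiting, for any $W\in D_{t'}(V)$, an explicit increasing index set of length $|U|-t-t'$ inside $\{1,\dots,|U|\}$ whose corresponding restriction of $U$ equals $W$. The key observation is that deletions compose at the level of index sets: if $V$ is the restriction of $U$ to an increasing index sequence, and $W$ is the restriction of $V$ to an increasing index sequence into $V$, then $W$ is the restriction of $U$ to the composed index sequence, which is again strictly increasing since it is the composition of two strictly increasing maps.

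Concretely, I would fix $W\in D_{t'}(V)$ and set $n=|U|$, so $|V|=n-t$ and $|W|=n-t-t'$. From $V\in D_t(U)$ I obtain indices $1\le i_1<i_2<\dots<i_{n-t}\le n$ with $V_\ell=U_{i_\ell}$, and from $W\in D_{t'}(V)$ indices $1\le j_1<j_2<\dots<j_{n-t-t'}\le n-t$ with $W_k=V_{j_k}$. Composing gives $W_k=V_{j_k}=U_{i_{j_k}}$, and because $k\mapsto j_k$ and $\ell\mapsto i_\ell$ are strictly increasing, so is $k\mapsto i_{j_k}$. Hence the composed indices $i_{j_1}<i_{j_2}<\dots<i_{j_{n-t-t'}}$ form a strictly increasing sequence in $\{1,\dots,n\}$ of size $n-t-t'$, which witnesses that $W$ is obtained from $U$ by deleting the remaining $t+t'$ positions, i.e., $W\in D_{t+t'}(U)$.

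This is really a routine compositionality fact about the subsequence relation; there is no substantive obstacle. The only care required is notational, in tracking the two index sequences and verifying that their composition remains strictly increasing and of the claimed length. Note also that the argument does not use anything specific to binary alphabets and so the lemma holds over any alphabet $\Sigma$, as stated.
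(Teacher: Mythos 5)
Your proof is correct and follows essentially the same approach as the paper: the paper argues informally that deleting $t$ symbols and then $t'$ more can be realized as deleting $t+t'$ symbols from $U$ directly, and your index-sequence composition is simply a more explicit formalization of that same compositionality fact.
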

\begin{proof}
$V$ was obtained from $U$ by deleting $t$ symbols. Any string in $D_{t'}(V)$ is obtained by deleting $t'$ symbols from $V$. The same string can be created by removing the $t+t'$ symbols directly from $U$, and thus it belongs also to $D_{t+t'}(U)$
\end{proof}   

\subsubsection {Flipping operation}[Figure \ref{fig:flip}]
\begin{lemma}\label{l:flip}[Flipping increases number of subsequences]
For any binary strings $U$, $V$ and for any bit $\sigma$, $|D_t(U \sigma \sigma V)| \le |D_t(U\sigma \overline{\sigma V})|$, where $\overline a$ denotes the string $a$ in which 0's are flipped to 1's, and vice versa.
\end{lemma}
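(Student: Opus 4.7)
The plan is to prove the inequality by induction on $|V|$, uniformly in $U$, $\sigma$, and $t$. The central tool is Lemma~\ref{l:binSplit}(ii), which I apply in parallel to $A := U\sigma\sigma V$ and $B := U\sigma\overline\sigma\overline V$ to peel off their last runs and expose two-term decompositions $|D_t(A)| = \alpha_1 + \alpha_2$ and $|D_t(B)| = \beta_1 + \beta_2$. I will then verify $\alpha_i \le \beta_i$ term by term, using the induction hypothesis whenever it applies and falling back on the chain rule Lemma~\ref{l:associativity} in the remaining residual sub-cases.

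For the inductive step, write the last run of $V$ as $c^k$ with $c \in \{\sigma, \overline\sigma\}$ and $k \ge 1$. In the \emph{generic} sub-cases---either $c = \overline\sigma$, or $c = \sigma$ and $V$ has more than one run---the $\sigma\sigma$-block of $A$ does not merge with $V$'s last run, so Lemma~\ref{l:binSplit}(ii) identifies matching terminal two-run structures in $A$ and $B$. Both the $(\alpha_1, \beta_1)$-pair and the $(\alpha_2, \beta_2)$-pair then become same-setup instances of the main inequality with $V$ replaced by strictly shorter strings $V_1$ and $V_2$ respectively, and are dispatched by the induction hypothesis. (In the sub-case $V = \overline\sigma^k$ the second terms in fact come out exactly equal, which is simpler still.)

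The main obstacle is the \emph{degenerate} case $V = \sigma^k$: here the $\sigma\sigma$-block of $A$ merges with $V$ (and with any $\sigma$-suffix of $U$) into one long $\sigma$-run, while in $B$ the middle $\sigma$ and the trailing $\overline\sigma^{k+1}$ stay separated. After applying Lemma~\ref{l:binSplit}(ii), the first-term pair $(\alpha_1, \beta_1)$ is still a same-setup instance with $V_1 = \sigma^{k-1}$ (handled by induction), but the second-term pair no longer matches the setup: it has the form $|D_s(U^\sharp)| \le |D_{s + \ell}(U)|$, where $U^\sharp$ is $U$ with its last $\ell$ trailing characters removed. This residual inequality is immediate from Lemma~\ref{l:associativity}, since any subsequence of $U^\sharp$ obtained by $s$ deletions is also obtained from $U$ by first deleting its last $\ell$ characters and then $s$ more, so $D_s(U^\sharp) \subseteq D_{s + \ell}(U)$. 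The corner cases in which $A$ collapses to a single run (i.e.\ $U$ empty or entirely made of $\sigma$'s) are trivial since then $|D_t(A)| = 1 \le |D_t(B)|$, and the base case $V = \epsilon$ (with $A = U\sigma\sigma$, $B = U\sigma\overline\sigma$) follows by the same template: the two first terms are exactly equal to $|D_t(U\sigma)|$, while the two second terms reduce to the same chain-rule inequality, dispatched again by Lemma~\ref{l:associativity}.
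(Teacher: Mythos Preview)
Your proof is correct and is essentially the mirror image of the paper's own argument: the paper inducts on $|U|$ and peels characters from the \emph{left} via Lemma~\ref{l:binSplit}(i), while you induct on $|V|$ and peel from the \emph{right} via Lemma~\ref{l:binSplit}(ii); in both proofs the residual inequality in the degenerate case (your $V=\sigma^k$, the paper's $U=\sigma^i$) is dispatched by the chain rule Lemma~\ref{l:associativity}. Because the statement is not left--right symmetric (only the $\sigma V$ part is complemented), the right-peeling version needs the small extra observation that the last runs of $A$ and $B$ have the same length $k$ so that the two binSplit decompositions line up---which you handle correctly---but beyond that the case structure and ideas are identical.
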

\begin{proof}
By induction on $|U|$. When $|U|=0$ the claim is $|D_t(\sigma \sigma V)| \le |D_t(\sigma \overline{\sigma V})|$. Let $V=\sigma^i \epsilon^j X$ for maximal $i$,$j$. When $j=0$ the claim is trivial ($\sigma \sigma V$ is a constant string, with 1 possible subsequence), so we assume $j>0$. Using Lemma \ref{l:binSplit} we get $|D_t(\sigma\sigma V)|=|D_t(\sigma V)|+|D_{t-2-i}(\epsilon^{j-1}X)|$. We compare that to $|D_t(\sigma\epsilon \overline V) = |D_t(\epsilon\overline V)| + |D_{t-1}(\overline V)|$. $|D_t(\sigma V)|=|D_t(\epsilon\overline V)|$ because of symmetry, and since $\epsilon^{j-1}X\subseteq D_{i+1}(V)$ we can use Lemma \ref{l:associativity} and get $|D_{t-2-i}(\epsilon^{j-1}X)| \le |D_{t-1}(\overline V)|$, and thus we prove the base of the induction.

\vspace{2mm} \noindent
Now for the induction step, assume the claim is true for $|U|<n$ and we look at $|U|=n$. We regard the different cases of the structure of $U$.
\begin{itemizei}
\item [Case 1:]  $U=\sigma^i\epsilon^jX$ for some $i,j>0$. We use Lemma \ref{l:binSplit} and get $|D_t(\sigma^i\epsilon^j X\sigma\sigma V)|=| D_t(\sigma^{i-1}\epsilon^j X\sigma\sigma V)| + |D_{t-i}(\epsilon^{j-1} X\sigma\sigma V)|$. We compare that to $|D_t(\sigma^i\epsilon^j X\sigma\epsilon \overline V)|=|D_t(\sigma^{i-1}\epsilon^j X\sigma\epsilon \overline V)| + |D_{t-i}(\epsilon^{j-1}X\sigma\epsilon \overline V)|$. On each of the arguments we can use our induction claim for $|U|-1$ and $|U|-i-1$.
\item [Case 2:] $U=\epsilon^i\sigma^jX$ for some $i,j>0$ we use the same method.
\item [Case 3:] $U=\epsilon^i$ for some $i>0$. $|D_t(\epsilon^i \sigma\sigma V)|=|D_t(\epsilon^{i-1}\sigma\sigma V)|+|D_{t-i}(\sigma V)|$. For the flipped string we get $|D_t(\epsilon^i \sigma\epsilon \overline V)|=|D_t(\epsilon^{i-1}\sigma\epsilon \overline V)|+|D_{t-i}(\epsilon \overline V)|$. In this case, the second argument in both summations is equal due to symmetry, and we can compare the first arguments using the induction hypothesis for $|U|-1$.
\item [Case 4:] $U=\sigma^i$ for some $i>0$. Let $V=\sigma^j X$ for maximal $j$. In case $|X|=0$ we get the trivial case of a uniform string again. For $|X|>0$ let $X=\epsilon Y$, and then $|D_t(U\sigma\sigma V)|=|D_t(\sigma^{i+2}V)| = |D_t(\sigma^{i+1}V)|+|D_{t-i-j-2}(Y)|$. Again we compare that to $|D_t(\sigma^{i+1}\epsilon \overline V)| = |D_t(\sigma^i \epsilon \overline V)|+|D_{t-i-1}(\overline V)|$, using the induction claim for the first argument, and Lemma \ref{l:associativity} together with symmetry for the second.
\end{itemizei}
\end{proof}

\begin{corollary} [Alternative proof for the maximality of $C_n$]
Given any string $X$ of length $n$, it can be transformed into the string $C_n$ by a series of flipping operation (as defined in Lemma \ref{l:flip}). Each such flip can only increase the number of subsequences, and thus we get a proof for the fact that $D_t(X)\le D_t(C_n)$. 
\end{corollary}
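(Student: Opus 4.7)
The plan is to produce an explicit sequence of flipping operations (each an instance of Lemma~\ref{l:flip}) that converts $X$ into an alternating string, and then invoke Lemma~\ref{l:flip} monotonically along the chain. As a monovariant I use $E(Y):=|\{j\,:\,Y_j=Y_{j+1}\}|$, the number of positions where two consecutive symbols of $Y$ agree. Observe that $E(Y)=0$ exactly when $Y$ is alternating, i.e., $Y\in\{C_n,\overline{C_n}\}$, and that $E(X)\le n-1$ always.

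First I would verify that a single flip strictly decreases $E$. Given any position $i$ with $X_i=X_{i+1}=\sigma$, Lemma~\ref{l:flip} (applied with $U=X_1\cdots X_{i-1}$ and $V=X_{i+2}\cdots X_n$) produces the string $X'$ satisfying $X'_j=X_j$ for $j\le i$ and $X'_j=\overline{X_j}$ for $j\ge i+1$. A direct check of the three possible positions then shows: pairs $(j,j+1)$ with $j\le i-1$ are untouched; the pair $(i,i+1)$ becomes $(\sigma,\overline{\sigma})$ and is no longer equal; and for $j\ge i+1$ both entries are flipped, so $(X'_j,X'_{j+1})$ is equal iff $(X_j,X_{j+1})$ was. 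Hence $E(X')=E(X)-1$, while Lemma~\ref{l:flip} yields $|D_t(X)|\le|D_t(X')|$. Crucially, the first symbol of $X$ is preserved by every such flip.

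Iterating, I obtain a chain $X=X^{(0)}\to X^{(1)}\to\cdots\to X^{(E(X))}=Y$ along which $E$ drops by one at each step, hence $E(Y)=0$, so $Y$ is alternating and shares its first symbol with $X$; thus $Y=C_n$ if $X$ begins with $0$, and $Y=\overline{C_n}$ otherwise. Stacking the inequalities from Lemma~\ref{l:flip} along the chain gives $|D_t(X)|\le|D_t(Y)|$. Since complementation is a bijection on binary strings that commutes with the deletion operation, we have $|D_t(\overline{C_n})|=|D_t(C_n)|$, and in either case the conclusion $|D_t(X)|\le|D_t(C_n)|$ follows.

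The only point that might give pause is that the procedure may land at $\overline{C_n}$ rather than at $C_n$, which is handled by the complementation symmetry above; apart from that, the monovariant calculation itself is immediate from the structure of the flipping operation, which preserves the prefix through position $i$ and flips the suffix starting at position $i+1$.
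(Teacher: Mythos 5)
Your proof is correct and takes essentially the same route the paper intends: repeatedly apply the flipping operation of Lemma~\ref{l:flip} to a position where two adjacent symbols agree, and stack the resulting inequalities. The paper states the corollary without a separate proof, and your write-up simply supplies the details it leaves implicit — the monovariant $E$ guaranteeing the process terminates at an alternating string, and the fact that the chain may end at $\overline{C_n}$ rather than $C_n$ (since a flip never changes the first symbol), which you correctly resolve via the complementation symmetry $|D_t(\overline{C_n})|=|D_t(C_n)|$.
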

\mbox{}

\subsubsection{Balancing operation}[Figure \ref{fig:balance}]
Informally, we refer to a string as {\em balanced}, if there is a low variability between the length of the string's runs. A balancing operation is one that decreases that variability, E.g. shortening a long run and increasing the length of a short run. The following lemma states terms in which balancing a string increases the number of its subsequences, and it is used later to prove maximality of string families.
\begin{lemma}\label{l:balanceOp}[Balancing increases the number of subsequences] For $X=S(x_1,\dots,x_r)$, and for any $t>0$, $1\le i<j\le r$ s.t. $x_i-x_j>1$, and $\{x_{i+1},\dots,x_{j-1}\}$ is symmetric  (i.e. $ x_2=x_{r-1}$, $x_3=x_{r-2}$,$\dots$), $|D_t(x_1,\dots,x_r)|\le |D_t(x_1,\dots,x_{i-1},x_i-1,x_{i+1},\dots,x_{j-1},x_j+1,x_{j+1},\dots,x_r)|$. In other words, decreasing the i-th run by 1, and increasing the j-th run by 1 can only increase the number of subsequences. 
\end{lemma}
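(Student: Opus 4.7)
The plan is to prove the lemma by induction on $n=|X|$, using Lemma~\ref{l:binSplit} to peel one character from an end of each of $X$ and $X'$ and then matching the resulting pairs of subproblems. Throughout, I will use freely that $|D_t|$ is invariant under both reversal and complementation of its argument.

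The heart of the argument will be the subcase $i=1,\,j=r$, where the balancing straddles the two extreme runs. I plan to apply Lemma~\ref{l:binSplit}(i) to $X$ (peeling from the left) and Lemma~\ref{l:binSplit}(ii) to $X'$ (peeling from the right). A direct check shows the first term of each of these recurrences equals $|D_t(x_1-1,x_2,\dots,x_{r-1},x_r)|$, so they cancel. The remaining comparison is
\[
|D_{t-x_1}(x_2-1,x_3,\dots,x_r)| \;\le\; |D_{t-x_r-1}(x_1-1,x_2,\dots,x_{r-1}-1)|.
\]
Reversing the LHS string and invoking the palindromic middle $(x_2,\dots,x_{r-1})$ rewrites it (up to complementation) as $|D_{t-x_1}(x_r,x_2,\dots,x_{r-2},x_{r-1}-1)|$. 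The RHS string differs from this one only by a prepended block of $x_1-x_r-1\ge 1$ copies of the leading character, and the two deletion budgets also differ by exactly that amount. Thus every subsequence counted on the LHS is realized on the RHS by first deleting the prepended block, which is an application of Lemma~\ref{l:associativity}, establishing the desired inequality.

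For the general inductive step (with $i>1$ or $j<r$), I plan to peel from an end that is not touched by the balancing: Lemma~\ref{l:binSplit}(ii) from the right when $j<r$, and Lemma~\ref{l:binSplit}(i) from the left when $j=r$ (which then forces $i>1$). Both recurrences split into two terms, and I will match them pairwise: each corresponding pair of subproblems is a balancing operation on a shorter string whose middle is still palindromic, so the induction hypothesis on $n$ handles each pair.

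The main obstacle I anticipate is the boundary subcase $i=2,\,j=r$ with $x_2-x_r=2$: peeling from the left produces a subproblem in which the balancing gap drops to $1$, outside the hypothesis of the lemma. Fortunately that subproblem has the special structure $(i',j')=(1,r')$ with a palindromic middle, and a direct check using the palindrome condition together with $x_2-1=x_r+1$ shows that the balanced version of the subproblem string is exactly the reversal of the original one (possibly followed by complementation, depending on the parity of $r$). Since $|D_t|$ is invariant under both operations, equality holds for that pair of terms and the induction still delivers the required $\le$.
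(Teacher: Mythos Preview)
Your heart case $i=1,\,j=r$ is exactly the paper's Lemma~\ref{l:balanceEdge}(i), and your plan to reach it by peeling single characters from an untouched end is a sound alternative to the paper's route (the paper instead proves the four edge cases of Lemma~\ref{l:balanceEdge} and then inducts on the number of \emph{outer} runs, peeling a whole outer run at once via Lemma~\ref{l:binSplitSum}).

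There is, however, a boundary case you missed. When you peel from the right with $j=r-1$ and $x_j=x_{r-1}=1$, the second term of Lemma~\ref{l:binSplit}(ii) applied to $X$ is
\[
|D_{t-x_r}(x_1,\dots,x_{r-2},\,x_{r-1}-1)|=|D_{t-x_r}(x_1,\dots,x_{r-2})|,
\]
a string with $r-2$ runs, whereas the second term for $X'$ is
\[
|D_{t-x_r}(x_1,\dots,x_i-1,\dots,x_{r-2},\,(x_{r-1}+1)-1)|=|D_{t-x_r}(x_1,\dots,x_i-1,\dots,x_{r-2},1)|,
\]
which has $r-1$ runs. These two strings have the same length but different run counts, so this pair is \emph{not} a balancing instance and your induction hypothesis does not apply to it. Your blanket claim that ``each corresponding pair of subproblems is a balancing operation on a shorter string'' fails here; the obstacle you flagged (the gap dropping to $1$ when $i=2,\,j=r,\,x_2-x_r=2$) is the mirror phenomenon on the left, but on the right the failure mode is different.

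The fix is the same trick you used in the heart case. For $i>1$ you can simply peel from the left instead (run $1$ is untouched), which avoids the issue entirely. For $i=1$ you cannot switch sides, but then the middle $(x_2,\dots,x_{r-2})$ is palindromic; peel $(x_1,\dots,x_{r-2})$ from the left and $(x_1-1,\dots,x_{r-2},1)$ from the right, cancel the common first term $|D_{t'}(x_1-1,\dots,x_{r-2})|$, reverse the remaining $X$-side string using the palindrome, and observe that it sits inside the $X'$-side string as a suffix with the deletion budgets differing by exactly $x_1-1$, so Lemma~\ref{l:associativity} finishes it. You should state and handle this case explicitly.
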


In order to prove Lemma \ref {l:balanceOp} we will need the following lemma that characterizes balancing operations near the edges of the string.
\begin{lemma}\label{l:balanceEdge}
Assume $\{x_2,\dots,x_{r-1}\}$ is symmetric. It follows that:
\begin {itemize}
\item [(i)] For $X=S(x_1,\dots,x_r)$ s.t. $x_1>x_r$, $|D_t(x_1,\dots\,x_r)|\le |D_t(x_1-1,x_2,\dots,x_{r-1},x_r+1)|$.  
\item [(ii)] For $X=S(x_1,\dots,x_r,z)$ s.t. $x_1>x_r$ and $z>0$, $|D_t(x_1,\dots\,x_r,z)|\le |D_t(x_1-1,x_2,\dots,x_{r-1},x_r+1,z)|$.  
\item [(iii)] For $X=S(y,x_1,\dots,x_r)$ s.t. $x_1-x_r>1$ and $y>0$, $|D_t(y,x_1,\dots\,x_r)|\le |D_t(y,x_1-1,x_2,\dots,x_{r-1},x_r+1)|$. 
\item [(iv)] For $X=S(y,x_1,\dots,x_r,z)$ s.t. $x_1-x_r>1$ and $y,z>0$, $|D_t(y,x_1,\dots\,x_r,z)|\le |D_t(y,x_1-1,x_2,\dots,x_{r-1},x_r+1,z)|$.  
\end{itemize}
\end{lemma}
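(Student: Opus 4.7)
The plan is to establish part (i) directly by induction on $x_1$ and then derive parts (iii), (ii), and (iv) from (i) (and from (ii)) by peeling the auxiliary runs $y$ and $z$ via Lemma~\ref{l:binSplitSum}. All four parts share the hypothesis that $(x_2,\ldots,x_{r-1})$ is symmetric, a property preserved by every reduction I invoke.

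For part (i), I induct on $x_1$. The base case $x_1 = x_r + 1$ is handled by noting that, thanks to the middle symmetry, the string $X' = S(x_r, x_2, \ldots, x_{r-1}, x_r+1)$ is exactly the reversal of $X = S(x_r+1, x_2, \ldots, x_{r-1}, x_r)$ (up to a complement when $r$ is even, which leaves $|D_t|$ unchanged), so $|D_t(X)| = |D_t(X')|$. For the step $x_1 > x_r + 1$, a single application of Lemma~\ref{l:binSplit}(i) to each of $X, X'$ gives
\begin{align*}
|D_t(X)|  &= |D_t(S(x_1-1,x_2,\ldots,x_r))| + |D_{t-x_1}(S(x_2-1,x_3,\ldots,x_r))|,\\
|D_t(X')| &= |D_t(S(x_1-2,x_2,\ldots,x_{r-1},x_r+1))| + |D_{t-x_1+1}(S(x_2-1,x_3,\ldots,x_{r-1},x_r+1))|.
\end{align*}
The first terms are bounded by the induction hypothesis (valid because $x_1 - 1 > x_r$). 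For the second terms, $S(x_2-1,x_3,\ldots,x_{r-1},x_r+1)$ is $S(x_2-1,x_3,\ldots,x_r)$ with one extra symbol appended to its final run, so the latter belongs to $D_1$ of the former, and Lemma~\ref{l:associativity} supplies the desired inequality between $|D_{t-x_1}(\cdot)|$ and $|D_{t-x_1+1}(\cdot)|$.

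Parts (iii) and (iv) follow by peeling the leading $y$-run via Lemma~\ref{l:binSplitSum}(i) from both $X$ and $X'$ (the indicator contributions cancel since $|X|=|X'|$), and then bounding the resulting ``first term'' and each summand by an instance of part (i) (for (iii)), respectively part (ii) (for (iv)). The hypothesis $x_1 - x_r > 1$ is used precisely to ensure that $x_1 - 1 > x_r$, so that part (i) or (ii) legitimately applies to the summands. Part (ii) itself is obtained from (i) by peeling the trailing $z$-run via Lemma~\ref{l:binSplitSum}(ii) and invoking part (i) both on the first term and on each summand. The main obstacle is the boundary case $x_r = 1$ in part (ii): the summand then contains the degenerate string $S(x_1,\ldots,x_{r-1},0)$, which is really an $(r-1)$-run string, so part (i) does not literally apply. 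I would resolve this via an auxiliary induction on $x_1$, proving $|D_{t'}(S(x_1, x_2, \ldots, x_{r-1}))| \le |D_{t'}(S(x_1-1, x_2, \ldots, x_{r-1}, 1))|$ for $x_1 \ge 1$; its inductive step mirrors that of part (i), combining Lemma~\ref{l:binSplit}(i) with Lemma~\ref{l:associativity}, and its base case $x_1 = 1$ is another reversal identity relying on middle symmetry.
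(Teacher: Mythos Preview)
Your proposal is correct. For parts (ii)--(iv) it coincides with the paper's proof: peel off $y$ or $z$ via Lemma~\ref{l:binSplitSum} and compare term by term using the already-established case. For part (i), however, your route differs from the paper's. The paper does \emph{not} induct on $x_1$; instead it applies Lemma~\ref{l:binSplit} once to $|D_t(x_1,\ldots,x_r)|$ and, for the balanced string, first uses the reversal symmetry $|D_t(x_1-1,x_2,\ldots,x_{r-1},x_r+1)|=|D_t(x_r+1,x_2,\ldots,x_{r-1},x_1-1)|$ and then applies Lemma~\ref{l:binSplit} once more. The two first terms are then \emph{equal} by reversal (not merely comparable by an inductive hypothesis), and the second terms are handled via Lemma~\ref{l:associativity} with a gap of $x_1-x_r-1$ deletions rather than a single deletion. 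Your induction on $x_1$ trades this one clever symmetry step for a longer but entirely elementary chain; both arguments ultimately rest on the same two ingredients (reversal symmetry of the symmetric middle, and Lemma~\ref{l:associativity}).

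A point in your favour: you explicitly flag and handle the degenerate case $x_r=1$ in part~(ii), where the summand $S(x_1,\ldots,x_{r-1},x_r-1)$ collapses to an $(r-1)$-run string and part~(i) does not apply verbatim. The paper's proof passes over this silently. Your auxiliary induction (with base case the reversal identity $|D_{t'}(S(1,x_2,\ldots,x_{r-1}))|=|D_{t'}(S(x_2,\ldots,x_{r-1},1))|$) is a clean fix.
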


\begin{proof}
\begin {itemize}
\item [(i)] When $r=2$ the claim is reduced to $D_t(x_1,x_2)\le D_t(x_1-1,x_2+1)$ for $x_1>x_2$. This is easily proved because $D_t(x_1,x_2)=\min(x_1,x_2,t)+1$. 
For $r>2$ we use Lemma \ref{l:binSplit} to get $|D_t(x_1,\dots\,x_r)|=|D_t(x_1-1,x_2,\dots,x_r)|+|D_{t-x_1}(x_2-1,x_3,\dots,x_r)|$. 
Using Lemma \ref{l:binSplit} and the symmetry of $\{x_2,\dots,x_{r-1}\}$ we get $|D_t(x_1-1,x_2,\dots,x_{r-1},x_r+1)|=|D_t(x_r+1,x_2,\dots,x_{r-1},x_1-1)|=|D_t(x_r,x_2,\dots,x_{r-1},x_1-1)|+|D_{t-x_r-1}(x_2-1,x_3,\dots,x_{r-1},x_1-1)|$. 
We compare the two expressions. Because of the symmetry $|D_t(x_1-1,x_2,\dots\,x_r)|=|D_t(x_r,x_2,\dots,x_{r-1},x_1-1)|$, and because $x_1>x_r$ it is true that $S(x_2-1,x_3,\dots,x_r)\in D_{x_1-x_r-1}(x_2-1,x_3,\dots,x_1-1)$ and thus using Lemma \ref{l:associativity} $|D_{t-x_1}(x_2-1,x_3,\dots,x_r)|\le |D_{t-x_r-1}(x_2-1,x_3,\dots,x_1-1)|$.
\item [(ii)] Applying Lemma \ref{l:binSplitSum}(ii) we get $|D_t(x_1,\dots\,x_r,z)|=|D_t(x_1,\dots\,x_r)|+\sum_{i=1}^z|D_{t-i}(x_1\dots\,x_{r-1},x_r-1)|+1|_{t>n-z}$, and $|D_t(x_1-1,x_2,\dots,x_{r-1},x_r+1,z)|= |D_t(x_1-1,x_2,\dots\,x_{r-1},x_r+1)|+\sum_{i=1}^z|D_{t-i}(x_1-1,x_2,\dots\,x_r)|+1|_{t>n-z}$. The two expressions are comparable argument by argument using (i) above, noticing that if $x_1>x_r$ then definitely $x_1>x_r-1$.
\item [(iii)] Applying Lemma \ref{l:binSplitSum} we get $|D_t(y,x_1,\dots\,x_r)|=|D_t(x_1,\dots\,x_r)|+\sum_{i=1}^y|D_{t-i}(x_1-1,x_2,\dots\,x_r)|+1|_{t>n-y}$, and $|D_t(y,x_1-1,x_2,\dots,x_{r-1},x_r+1)|= |D_t(x_1-1,x_2,\dots\,x_{r-1},x_r+1)|+\sum_{i=1}^y|D_{t-i}(x_1-2,x_2,\dots\,x_{r-1},x_r+1)|+1|_{t>n-y}$. The two expressions are comparable argument by argument using (i) above and noticing that if $x_1-x_r>1$ then definitely $x_1>x_r$ and $x_1-1>x_r$.
\item [(iv)] We use Lemma \ref{l:binSplitSum} to get $|D_t(y,x_1,\dots\,x_r,z)|=|D_t(x_1,\dots\,x_r,z)|+\sum_{i=1}^y|D_{t-i}(x_1-1,x_2,\dots\,x_r,z)|+1|_{t>n-y}$ and $|D_t(y,x_1-1,x_2,\dots,x_{r-1},x_r+1,z)|=|D_t(x_1-1,x_2,\dots,x_{r-1},x_r+1,z)|+\sum_{i=1}^y|D_{t-i}(x_1-2,x_2,\dots,x_{r-1},x_r+1,z)|+1|_{t>n-y}$. The two expressions are comparable argument by argument using (ii) above, as the condition $x_1-x_r>1$ guarantees that $x_1-1>x_r$. 
\end{itemize}
\end{proof}
Now we can prove Lemma \ref {l:balanceOp} [Balancing increases the number of subsequences]:

\begin{proof}
We will prove by induction on the number of runs in $X$ outside of the sequence $S(x_i,\dots,x_j)$, explicitly on $(i-1)+(r-j)=r+i-j-1$. We will denote these runs {\em outer runs}. When we have only one outer run, the lemma is reduced to Lemma \ref{l:balanceEdge}(ii) or \ref{l:balanceEdge}(iii). Now we assume that there are at least two outer runs. If the outer runs are one on each side ($i=2$ and $j=r-1$) this is the case of Lemma \ref{l:balanceEdge}(iv). Otherwise, at least on one of the sides there are two or more runs ($i>2$ or $j<r-1$). We assume w.l.o.g that $i>2$, and then we can use Lemma \ref{l:binSplitSum} and the induction hypothesis on strings with the number of outer runs decreased by 1. 

\end{proof}

\section{Balanced strings}
\label{sec:balanced}
In this section we define the family of strings named {\em Balanced strings}. We call a string balanced, if all the runs of symbols in the string are of equal length. Formally, we denote by $B_{r,k}$ the binary string of length $rk$, with $r$ runs, each of length $k$. E.g. $B_{3,4}=S(4,4,4)=000011110000$. We will prove that of all strings with length $rk$ and $r$ runs, the balanced string has the maximal number of subsequences, under any number of deletions.

\begin{theorem}\label{t:balancedMax}
Let $X=S(x_1,\dots,x_r)$, $n=\sum_{i=1}^r x_i$, and $k=n/r$. If $k$ is an integer, then $|D_t(X)|\le |D_t(B_{r,k}) |$.
\end{theorem}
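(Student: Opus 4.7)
The plan is to transform $X$ into $B_{r,k}$ through a sequence of balancing operations (Lemma~\ref{l:balanceOp} and Lemma~\ref{l:balanceEdge}), each of which preserves or increases the number of subsequences. As a measure of progress I take the potential
\[
\Phi(X) = \sum_{i=1}^r (x_i - k)^2,
\]
a nonnegative integer vanishing precisely when $X = B_{r,k}$. A single balancing move $(x_i, x_j) \mapsto (x_i - 1, x_j + 1)$ with $x_i > x_j + 1$ strictly decreases $\Phi$, since a direct computation gives $\Delta\Phi = -2(x_i - x_j - 1) < 0$. Consequently, the theorem would follow by induction on $\Phi(X)$, provided that whenever $\Phi(X) > 0$ we can exhibit some balancing move applicable to $X$ (using also that reversing the run sequence produces a string with the same number of subsequences under deletion, so the direction $x_i > x_j + 1$ can be arranged in either orientation).

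To establish the existence of an applicable move I would proceed by case analysis. First, if some adjacent runs satisfy $|x_i - x_{i+1}| \ge 2$, Lemma~\ref{l:balanceOp} applies directly with $j = i+1$: the middle $\{x_{i+1}, \ldots, x_{j-1}\}$ is empty and hence vacuously symmetric. Otherwise all adjacent differences are at most one; since the average $k$ is an integer and $X$ is not constant, both $\max_i x_i \ge k+1$ and $\min_i x_i \le k-1$ must hold, so the sequence contains a smooth ``ramp'' between the extremes. In this regime I would choose indices $i < j$ enclosing endpoints with $x_i - x_j \ge 2$ and with $\{x_{i+1}, \ldots, x_{j-1}\}$ symmetric, exploiting that any single-element middle is automatically symmetric (so skip-one balancing is admissible whenever the endpoints differ by at least two), and bootstrapping to longer symmetric middles by first applying smaller balancing moves to palindromize the interior.

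The main obstacle is the second case above: ensuring the existence of a pair of indices whose enclosed middle is symmetric is delicate when adjacent differences are small. I expect this will be handled by a sub-induction on $r$. When $x_1 = x_r$, one may conceptually peel off the outer runs and apply the inductive hypothesis to $S(x_2, \ldots, x_{r-1})$; when $x_1 \ne x_r$, one invokes Lemma~\ref{l:balanceEdge} to push the outer runs toward equality, either directly decreasing $\Phi$ or setting up a symmetric interior that enables a balancing move inside the string. Combining these cases carefully is, I expect, the most technically subtle part of the argument.
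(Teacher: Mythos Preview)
Your overall strategy---transforming $X$ into $B_{r,k}$ by a sequence of balancing moves, using a sum-of-squares potential for termination---is exactly the paper's approach. The gap is in how you locate an admissible balancing move when $\Phi(X)>0$.

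Your case analysis is incomplete. The first case (some adjacent pair with $|x_i-x_{i+1}|\ge 2$) is fine. In the second case you only sketch: ``bootstrapping to longer symmetric middles'', a sub-induction on $r$ by peeling off outer runs, and invoking Lemma~\ref{l:balanceEdge} when $x_1\ne x_r$. None of this is carried out, and the peeling idea is problematic because $S(x_2,\ldots,x_{r-1})$ need not have integer average run length, so the inductive hypothesis does not apply to it. You yourself flag this as ``the most technically subtle part'', and as written it is not a proof.

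The paper sidesteps all of this with a single clean observation: among all pairs $(p,q)$ with $p<q$ and $|x_p-x_q|>1$, take one with $q-p$ \emph{minimal}. Then every pair at distance $<q-p$ differs by at most $1$; in particular each of $x_{p+1},\ldots,x_{q-1}$ is within $1$ of both $x_p$ and $x_q$, and these values are pairwise within $1$ of each other. Since $|x_p-x_q|\ge 2$, a short squeeze forces $x_{p+1}=\cdots=x_{q-1}$, so the middle is constant and in particular symmetric. Lemma~\ref{l:balanceOp} then applies directly (using reversal symmetry to arrange $x_p>x_q$), and one step strictly decreases $\sum_i x_i^2$. No sub-induction, no palindromizing, no appeal to Lemma~\ref{l:balanceEdge} is needed.
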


\begin{proof}
The main idea of the proof is that any such string $X$ can be transformed into $B_{r,k}$ by repeatably applying the Balancing Lemma \ref{l:balanceOp}. Each such step can only increase the number of subsequences, so if such a series of balance operations can be found, the theorem is proved. We will construct a series of strings, $X_0,\dots,X_m$, such that $X_0=X$, $X_m=B_{r,k}$ and for any $0\le i<m$, $|D_t(X_i)|\le|D_t(X_{i+1})|$. Given a string $X_i\neq B_{r,k}$, we denote $X_i=S(x^{(i)}_1,\dots,x^{(i)}_r)$. We choose a pair $(p,q)$ s.t $|x^{(i)}_p-x^{(i)}_q|>1$, $p<q$ and $q-p$ is minimal. Such a pair exists, because at least one run is of length different from $k$ (w.l.o.g, bigger than $k$), and thus there is at least one other run with length smaller than $k$. Assume w.l.o.g that $x^{(i)}_p>x^{(i)}_q$, and then we can conclude that $x^{(i)}_p> x^{(i)}_{p+1}=x^{(i)}_{p+2}=\dots=x^{(i)}_{q-1}> x^{(i)}_q$, otherwise we get a contradiction to the minimality of $(p,q)$. We will define $X_{i+1}$ to be the string achieved from $X_i$ by decreasing the $p^{th}$ run by 1, and increasing the $q^{th}$ run by 1. Each pair of strings $X_i,X_i+1$ admits to the conditions of Lemma \ref{l:balanceOp} and thus $|D_t(X_i)|\le|D_t(X_{i+1})|$. This process is finite, because the value of $\sum_{i=0}^r x_i^2$ is a non negative integer that must decreases at every step. An example of the balancing process we use is displayed in Table I.
\end{proof}

\begin{table}
\caption{Example of a balancing process as defined in the proof of Theorem \ref{t:balancedMax}}
\begin{center}
\begin{tabular}{ l c c c c}
$i$ & $X_{i}$ & runs & $\sum x_i^2$ & $D_6(X_i)$  \\ 
0 & 000111111100100 & {3,7,2,1,2} & 67 & 43\\
1 & 000111111000100 & {3,6,3,1,2} & 59 & 56\\
2 & 000111110000100 & {3,5,4,1,2} & 55 & 63\\
3 & 000111110001100 & {3,5,3,2,2} & 51 & 85\\
4 & 000111100001100 & {3,4,4,2,2} & 49 & 92\\
5 & 000111100011100 & {3,4,3,3,2} & 47 & 102\\
6 & 000111000111000 & {3,3,3,3,3} & 45 & 105\\
\end{tabular}
\end{center}
\end{table}

We derive the following corollary for the case where $n$ is not divisible by $r$.
\begin{corollary}\label{c:balancedMax}
Let $X=S(x_1,\dots,x_r)$, $n=\sum_{i=1}^r x_i$, and $\bar{k}=n/r$.  $D_t(X)\le |D_t(B_{r,\lceil \bar{k} \rceil}) |$.
\end{corollary}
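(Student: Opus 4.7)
The plan is to reduce the general (non-divisible) case to Theorem~\ref{t:balancedMax} by padding $X$ up to a length divisible by $r$, while preserving its run count. Set $n' = r \lceil \bar{k} \rceil$, so that $n' \geq n$ and $n'/r = \lceil \bar{k}\rceil$ is an integer.

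First, I would construct an auxiliary $r$-run string $X' = S(x_1',\dots,x_r')$ with $x_i' \geq x_i$ for every $i$ and $\sum_{i=1}^r x_i' = n'$; any such tuple works (for instance, simply put all the extra $n' - n$ symbols into the first run). Viewing the difference $n' - n$ as a sequence of one-character insertions, each inserted symbol can be placed into the \emph{interior} of a pre-existing run of the matching type, so that no new run is ever created and the run count remains exactly $r$. Applying Lemma~\ref{l:insertion} (insertion never decreases $|D_t|$) once per insertion yields
$$
|D_t(X)| \;\leq\; |D_t(X')|.
$$

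Since $X'$ is now an $r$-run string whose length $n'$ is divisible by $r$, Theorem~\ref{t:balancedMax} applies directly and gives $|D_t(X')| \leq |D_t(B_{r,\lceil \bar{k}\rceil})|$. Chaining the two inequalities yields the claim. The only point that genuinely needs verification is that each of the $n' - n$ padding insertions can be realized one at a time without changing the run count; this is automatic once one agrees to insert each extra symbol into the interior of an existing run of the same type, so there is no substantive obstacle — the corollary is essentially a formal bookkeeping consequence of Lemma~\ref{l:insertion} combined with Theorem~\ref{t:balancedMax}.
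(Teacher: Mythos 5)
Your proposal is correct and follows essentially the same route as the paper: pad $X$ with $r\lceil\bar{k}\rceil-n$ extra symbols inside an existing run (the paper puts them all in the last run), invoke Lemma~\ref{l:insertion} for each insertion, and then apply Theorem~\ref{t:balancedMax} to the resulting $r$-run string of length $r\lceil\bar{k}\rceil$.
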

\begin{proof}
For integral $\bar{k}$ this is the case of Theorem \ref{t:balancedMax}. Otherwise, we denote $\alpha=r\lceil \bar{k} \rceil -n$, and let $Y=|D_t(x_1,\dots,x_{r-1},x_r+\alpha)|$. Using Lemma \ref{l:insertion} $|D_t(X)|\le|D_t(Y)|$, and since  $|Y|= r\lceil \bar{k} \rceil $ and $r(Y)=r$,  using Theorem \ref{t:balancedMax} $|D_t(Y)\le |D_t(B_{r,\lceil \bar{k} \rceil})|$.
\end{proof}

\section{Our Upper bound}
\label{sec:upper}

In this section we present an upper bound for the number of subsequences of a string obtained by deletions. 
We develop a recursive expression for the exact number of subsequences of a balanced string. We then find an explicit form for this expression, and use it to obtain a tight upper bound on the number of subsequences of a general string. 

\subsection{Recursive expression}
\begin{definition}
For all $r,k$ , Let $B'_{r,k}$ be the string obtained from $B_{r,k}$ by removing the first symbol. E.g. $B\rq_{3,5}=S(4,5,5)=00001111100000$.
\end{definition}

\begin{definition}
Let  $b(r,k,t)=|D_t(B_{r,k})|$ and $b'(r,k,t)=|D_t(B'_{r,k})|$.
\end{definition}

\begin{lemma}\label{l:BSplit}
For all $r,k,t$, $|D_t(B_{r,k})| = |D_t(B'_{r,k})|+|D_{t-k}(B'_{r-1,k})|$
\end{lemma}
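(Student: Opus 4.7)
The plan is that the statement follows almost immediately from Lemma \ref{l:binSplit}(i) combined with the elementary bit-flip symmetry $|D_t(W)| = |D_t(\overline{W})|$ for binary strings. I would proceed as follows.

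For $r \ge 2$, write $B_{r,k} = 0^k\, 1^k\, Z$, where $Z$ is the string consisting of the remaining $r-2$ runs of length $k$, starting with the symbol $0$ (with $Z$ empty when $r = 2$). This matches the template $X = \sigma^i \epsilon^j Y$ of Lemma \ref{l:binSplit}(i) with $\sigma = 0$, $\epsilon = 1$, $i = j = k$, and $Y = Z$. Applying that lemma gives
\[
|D_t(B_{r,k})| \;=\; |D_t(0^{k-1} 1^k Z)| \;+\; |D_{t-k}(1^{k-1} Z)|.
\]

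The remaining work is to identify the two summands with the two quantities appearing in the claim. The first summand equals $|D_t(B'_{r,k})|$ immediately from the definition of $B'_{r,k}$ as $B_{r,k}$ with its leading $0$ removed, since that removal turns the first run of length $k$ into a run of length $k-1$ and leaves $1^k Z$ untouched. For the second summand, observe that $1^{k-1} Z$ has exactly $r-1$ runs (one run of $1$'s of length $k-1$, followed by the $r-2$ runs of length $k$ that comprise $Z$), and it coincides with the bit-complement of $B'_{r-1,k} = 0^{k-1} 1^k 0^k \cdots$ (the roles of $0$ and $1$ being swapped). The global bit-flip map $0 \leftrightarrow 1$ induces a bijection between $D_{t-k}(1^{k-1} Z)$ and $D_{t-k}(B'_{r-1,k})$, so these sets have the same cardinality, which yields the claimed identity.

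The main step is essentially a single invocation of a previously established lemma, so there is no real obstacle; the argument is bookkeeping together with the bit-flip symmetry. The only care needed is to handle the boundary cases separately, namely the condition $t < |B_{r,k}|$ required by Lemma \ref{l:binSplit}(i), the degenerate case $r = 1$ (where $B_{r,k}$ has only one run and the split does not apply), and the case $k = 1$ (where $0^{k-1}$ is empty). Each of these can be verified directly against the paper's stated conventions $|D_t(X)| = 1$ for $t \in \{0, |X|\}$ and $|D_t(X)| = 0$ for $t > |X|$.
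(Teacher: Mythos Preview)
Your argument is correct and is exactly the approach the paper has in mind: the paper's entire proof is the one line ``This is derived from Lemma~\ref{l:binSplit},'' and you have simply spelled out that derivation (plus the tacit bit-flip symmetry) in full. One small caveat on your boundary discussion: at $t = rk$ with $r \ge 2$ the identity actually fails under the paper's conventions (the left side is $1$ while both terms on the right vanish), so this is a minor overreach in the lemma's ``for all $r,k,t$'' rather than something you can verify---but the lemma is only ever invoked for $t < k(r-1)$, so it is harmless.
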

\begin{proof}
This is derived from Lemma \ref{l:binSplit}
\end{proof}

When $k$ is known from the context, we will use the short notations $B_r$ and $B'_r$ for $B_{r,k}$ and $B'_{r,k}$ repectively. Likewise $b(r,t)$ and $b'(r,t)$ denote $b(r,k,t)$ and $b'(r,k,t)$ respectively. 

\begin{lemma}[Recursive expression for $b\rq$]
\label{l:recFormB}
\[
  b\rq(r,t)=
  \begin{cases}
     0  &\text{if}\; t<0 \;\text{or}\; t\ge kr\\
     1+\sum_{i=1}^{k-1}b\rq(r-1,t-i)  &\text{if}\; k(r-1)\le t<kr \\
     \\[1ex]
     \begin{aligned}
     b\rq(&r-2,t-k)+  \\
	&\sum_{i=0}^{k-1}b\rq(r-1,t-i) 
     \end{aligned}  &\text{otherwise}
  \end{cases}
\]
\end{lemma}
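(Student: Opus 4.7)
The plan is to apply the iterated splitting identity (Lemma~\ref{l:binSplitSum}(i)) once to the first run of $B'_{r,k}=S(k-1,k,k,\ldots,k)$, and then repackage the resulting terms as $b$- and $b'$-values on smaller parameters.

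First I would dispose of the degenerate cases. For $t<0$ the convention gives $b'(r,t)=0$, and for $t\ge kr$ we have $t>kr-1=|B'_{r,k}|$, so again $b'(r,t)=0$, matching the top branch of the formula.

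For the substantive range $0\le t<kr$, apply Lemma~\ref{l:binSplitSum}(i) with $x_1=k-1$ and $n=kr-1$, observing $n-x_1=k(r-1)$. The suffix strings $(k,k,\ldots,k)$ and $(k-1,k,k,\ldots,k)$ produced by the split begin with the symbol $1$ (the initial run of zeros having been consumed), but bit-complementation preserves $|D_t(\cdot)|$, so their subsequence counts equal $b(r-1,t)$ and $b'(r-1,t-i)$ respectively. This yields
$$b'(r,t) \;=\; b(r-1,t) \;+\; \sum_{i=1}^{k-1} b'(r-1,t-i) \;+\; 1|_{t>k(r-1)}.$$

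To conclude, I would split on whether $t\ge k(r-1)$. In the range $k(r-1)\le t<kr$ the first term and the indicator jointly contribute exactly $1$: at $t=k(r-1)$ one has $b(r-1,t)=|D_{|B_{r-1,k}|}(B_{r-1,k})|=1$ and the indicator vanishes, while for $t>k(r-1)$ the first term is zero (as $t$ exceeds $|B_{r-1,k}|$) and the indicator contributes $1$. Either way the surviving expression is the middle branch $1+\sum_{i=1}^{k-1} b'(r-1,t-i)$. For the remaining range $0\le t<k(r-1)$ the indicator is zero, and Lemma~\ref{l:BSplit} lets me replace $b(r-1,t)$ by $b'(r-1,t)+b'(r-2,t-k)$; absorbing $b'(r-1,t)$ as the $i=0$ term of the sum recovers the ``otherwise'' branch. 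The main obstacle beyond routine bookkeeping is precisely this boundary $t=k(r-1)$, where $b(r-1,t)$ drops by one and the indicator rises by one simultaneously; the cleanness of the three-case formula hinges on recognising that these two discontinuities cancel exactly.
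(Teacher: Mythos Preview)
Your argument is correct and follows essentially the same route as the paper: apply Lemma~\ref{l:binSplitSum}(i) to the leading run of $B'_{r,k}$, then case-split on the position of $t$ relative to $k(r-1)$, invoking Lemma~\ref{l:BSplit} in the sub-critical range. The extra remarks you include (the complementation symmetry for the suffix strings, and the explicit observation that the jump in the indicator at $t=k(r-1)$ exactly compensates the drop in $b(r-1,t)$) are not in the paper's write-up but only make the reasoning more transparent.
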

\begin{proof}
Using Lemma \ref{l:binSplitSum} we get $b'(r,t)=b(r-1,t)+\sum_{i=1}^{k-1}b'(r-1,t-i)+1|_{t>k(r-1)}$. We check the following cases:
\begin{itemizei}
\item[(i) $t<k(r-1)$:] Using Lemma \ref{l:binSplit}, $b(r-1,t)=b'(r-1,t)+b'(r-2,t-k)$, and we get $b'(r,t)=b'(r-2,t-k)+\sum_{i=0}^{k-1}b'(r-1,t-i)$.
\item[(ii) $t=k(r-1)$:] In this case $t=|B_{r-1}|$ and $b(r-1,t)=1$. We get $b'(r,t)=1+\sum_{i=1}^{k-1}b'(r-1,t-i)$.
\item[(iii) $t>k(r-1)$:] Here $t>|B_{r-1}|$ and $b(r-1,t)=0$. We get $b'(r,t)=\sum_{i=1}^{k-1}b'(r-1,t-i)+1$.
\end{itemizei}
Rearranging the cases we get the claim of the lemma.
\end{proof}
\subsection{Solving the recursion}

When calculating $b\rq(r,t)$ we expand the recursive expression iteratively, until all $b\rq$ expressions reach their boundary condition, and get zero value. The only positive contribution in this sum is from the 1 in the second case ( $1+\sum_{i=1}^{k-1}b\rq(r-1,t-i)$ ). By counting how many times this value is added, we can get the explicit value of $b\rq(r,t)$. The 1 values are added exactly every time the second case is used, i.e. when expanding the value of $b\rq(\tilde r,\tilde t)$ for $\tilde r,\tilde t$ that fulfill the condition $k(\tilde r-1)\le \tilde t<k\tilde r$. When expanding $b\rq(r,t)$ these are exactly the integral solutions for $\tilde r=\lfloor\frac {\tilde t} k\rfloor+1, 0\le \tilde t \le t$, which are simply the $t+1$ pairs $(r_i,t_i)=(\lfloor\frac{i}{k}+1\rfloor,i)$ for $0\le i \le t$.
We will count the number of times that $b\rq(\tilde{r},\tilde{t})$ appears in the complete expansion of $b\rq(r,t)$. Based on the recursion form in Lemma \ref{l:recFormB}, the expression $b\rq(\tilde{r},\tilde{t})$ can only appear in the single expansion of one of the following expressions: $b\rq(\tilde{r}+2,\tilde{t}+k)$, or $b\rq(\tilde{r}+1,\tilde{t}+i)$ when $0\le i \le k-1$. Counting the number of those paths is equivalent to calculating the number of possible sets of ordered tuples $\{(r_j,t_j)\}$ selected from the set $\{(2,k),(1,0),(1,1),\dots,(1,k-1)\}$ s.t. $\sum r_j = r-\tilde{r}$ and $\sum t_j = t-\tilde{t}$.

\begin{definition}
We denote as $\mathbb S_k$ the set $\{(2,k),(1,0),(1,1),\dots,(1,k-1)\}$, and as $\#P(\Delta r, \Delta t)$ the number of possible sets of ordered tuples $\{(r_j,t_j)\}$ selected from the set $\mathbb S_k$ s.t. $\sum r_j = \Delta r$ and $\sum t_j = \Delta t$. 
$\#P_j(\Delta r, \Delta t)$ will denote the number of such sets using the tuple $(2,k)$ exactly $j$ times.
\end{definition}

\begin{lemma}\label{l:calcP0}
$$\#P_0(\Delta r, \Delta t)=\sum_{i=0}^{\lfloor \frac{\Delta t}k \rfloor}(-1)^i\binom{\Delta r}{i}\multiset{\Delta r}{\Delta t-ik}$$
\end{lemma}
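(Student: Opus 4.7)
The plan is to reinterpret $\#P_0(\Delta r,\Delta t)$ as a count of bounded compositions and then apply standard inclusion--exclusion. First I would observe that, since the tuple $(2,k)$ is used zero times, every chosen element of $\mathbb S_k$ is of the form $(1,t_j)$ with $t_j\in\{0,1,\dots,k-1\}$. Consequently any selection summing to $(\Delta r,\Delta t)$ must consist of exactly $\Delta r$ tuples, and $\#P_0(\Delta r,\Delta t)$ equals the number of ordered sequences $(t_1,\dots,t_{\Delta r})$ with entries in $\{0,1,\dots,k-1\}$ and $\sum_{j=1}^{\Delta r} t_j=\Delta t$.

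Next, I would drop the upper bound $t_j\le k-1$: by stars-and-bars, the number of nonnegative integer solutions of $t_1+\dots+t_{\Delta r}=\Delta t$ is exactly $\multiset{\Delta r}{\Delta t}$. To enforce the upper bound I would introduce the ``bad'' events $A_j=\{t_j\ge k\}$ for $j=1,\dots,\Delta r$ and run standard inclusion--exclusion. For an index set $I\subseteq\{1,\dots,\Delta r\}$ of size $i$, the substitution $t'_j=t_j-k$ for $j\in I$ (and $t'_j=t_j$ otherwise) sets up a bijection between $\bigcap_{j\in I}A_j$ and unconstrained nonnegative integer solutions of $t'_1+\dots+t'_{\Delta r}=\Delta t-ik$, of which there are $\multiset{\Delta r}{\Delta t-ik}$. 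Summing the alternating contributions over all subsets, grouped by cardinality, produces the weighted sum $\sum_{i\ge 0}(-1)^i\binom{\Delta r}{i}\multiset{\Delta r}{\Delta t-ik}$.

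Finally, I would note that the summand vanishes once $\Delta t-ik<0$, by the paper's convention that a binomial/multiset coefficient with negative top or bottom is $0$, so the sum may be truncated at $i=\lfloor\Delta t/k\rfloor$, yielding the stated identity. I do not expect a substantive obstacle: the argument is a textbook bounded--composition inclusion--exclusion. The only point that requires care is the initial reduction, where one must confirm that the paper's ``set of ordered tuples'' wording really corresponds to ordered sequences (as dictated by the recursive expansion in Lemma~\ref{l:recFormB}, where each step in the expansion is a distinguishable choice), so that inclusion--exclusion is being applied to the correct combinatorial object.
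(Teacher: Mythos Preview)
Your proposal is correct and follows essentially the same approach as the paper: both reduce $\#P_0(\Delta r,\Delta t)$ to counting ordered compositions of $\Delta t$ into $\Delta r$ parts each at most $k-1$, take the unconstrained count $\multiset{\Delta r}{\Delta t}$ via stars-and-bars, and apply inclusion--exclusion on the events $\{t_j\ge k\}$ to obtain the alternating sum. Your write-up is in fact slightly more careful than the paper's (you spell out the substitution bijection and the truncation explicitly), and your remark about verifying that ``set of ordered tuples'' really means ordered sequences is a reasonable sanity check, but there is no substantive difference in method.
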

\begin{proof}
In the case of $\#P_0$, the problem is reduced to finding the number of ordered partitions of $t$ into $r$ parts, each of size no larger than $k-1$. 
The following development follows the technique used is \cite{Bhar} in the context of counting {\em deletion patterns}, similar results are calculated in \cite{Ratsaby}. 
This partitioning problem can be restated as counting the different solutions $\{y_i\}$ to the equations $\sum_{i=0}^r y_i=t$, $\forall i: y_i<k$.
The number of solutions ignoring the constraints $y_i< k$ is equivalent to the number of r-partitions of $t$, which is $\multiset{r}{t}=\binom{r+t-1}{t}$. 
The number of solutions that violate the constraint $y_1< k$ is $\multiset{r}{t-k}$. Subtracting this for each $y_i$, we get $\multiset{r}{t}-r\multiset{r}{t-k}$. Now we subtracted too much, because solutions that violate two constraints are subtracted twice. 
The number of solutions that violate the two constraints $y_1<k$ and $y_2<k$ is $\multiset{r}{t-2k}$, and there are $\binom{r}2$ such pairs. 
Adding these cases back to the count we get $\multiset{r}{t}-r\multiset{r}{t-k}+\binom{r}2\multiset{r}{t-2k}$ 
Now again we have to account for the solutions that violate 3 constraints, that were added too many times, and so on. 
Putting it all together we get $\#P_0(r, t)=\sum_{i=0}^{\lfloor \frac t k \rfloor}(-1)^i\binom{r}{i}\multiset{r}{t-ik}$.

\end{proof}

\begin{lemma}\label{l:calcP}
$$\#P(\Delta r,\Delta t)=\sum_{j=0}^{\lfloor \frac{\Delta t}k \rfloor}\binom{\Delta r-j}{j}\#P_0(\Delta r-2j,\Delta t-jk)$$
\end{lemma}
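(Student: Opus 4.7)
The plan is to prove the identity by conditioning on $j$, the number of times the ``heavy'' tuple $(2,k)$ appears in the ordered sequence, and then showing that
\[
\#P_j(\Delta r,\Delta t)=\binom{\Delta r-j}{j}\,\#P_0(\Delta r-2j,\,\Delta t-jk).
\]
Since every ordered sequence counted by $\#P(\Delta r,\Delta t)$ uses some specific number $j\ge 0$ of copies of $(2,k)$, summing this identity over $j$ yields the claim. The summation range $0\le j\le \lfloor \Delta t/k\rfloor$ is forced by the requirement $\Delta t-jk\ge 0$; for larger $j$ the factor $\#P_0$ vanishes.

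To establish the per-$j$ identity, I would argue as follows. Fix $j$. An ordered sequence counted by $\#P_j(\Delta r,\Delta t)$ consists of $j$ occurrences of $(2,k)$ interleaved with some ordered sequence of ``light'' tuples taken from $\{(1,0),(1,1),\dots,(1,k-1)\}$. Since each light tuple contributes $1$ to the first coordinate and each heavy tuple contributes $2$, the number of light tuples must be $\Delta r-2j$, and since the heavy tuples contribute $jk$ to the second coordinate, the light-tuple second coordinates must sum to $\Delta t-jk$. The number of ordered sequences of light tuples meeting these two conditions is precisely $\#P_0(\Delta r-2j,\,\Delta t-jk)$, as this is exactly the quantity counted by Lemma~\ref{l:calcP0}.

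It remains to count the ways to merge the $j$ heavy tuples into a given ordered sequence of $\Delta r-2j$ light tuples. The merged sequence has total length $(\Delta r-2j)+j=\Delta r-j$, and specifying which $j$ of these positions hold the (indistinguishable) copies of $(2,k)$ uniquely determines the interleaving. This gives the factor $\binom{\Delta r-j}{j}$, and multiplying by $\#P_0(\Delta r-2j,\,\Delta t-jk)$ yields the per-$j$ identity.

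I do not anticipate a substantial obstacle here: the argument is a routine ``choose positions for the heavy tuples, then independently choose the light subsequence'' decomposition. The only mild subtlety is being careful that the heavy tuples are identical (so they contribute $\binom{\Delta r-j}{j}$ rather than a permutation factor), and that the summation upper limit $\lfloor \Delta t/k\rfloor$ is automatically enforced by the boundary behaviour of $\#P_0$, so no separate case analysis is needed. Summing the identity over $j$ completes the proof of the lemma.
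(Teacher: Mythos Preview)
Your proposal is correct and follows essentially the same argument as the paper: condition on the number $j$ of copies of $(2,k)$, observe that the remaining $\Delta r-2j$ light tuples are counted by $\#P_0(\Delta r-2j,\Delta t-jk)$, and multiply by $\binom{\Delta r-j}{j}$ for the choice of positions of the heavy tuples in the merged sequence of length $\Delta r-j$. Your write-up is in fact slightly more explicit than the paper's about why the binomial coefficient is $\binom{\Delta r-j}{j}$ and about the summation range, but the approach is identical.
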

\begin{proof}
First we calculate $\#P_j(\Delta r,\Delta t)$. If we first select $j$ times the tuple $(2,k)$,we are left with $\#P_0(\Delta r-2j,\Delta t-jk)$ ways to select the remaining tuples. We than have $\binom{\Delta r-j}{j}$ ways to insert the $(2,k)$ tuples inside the rest, and thus $\#P_j(\Delta r,\Delta t)=\binom{\Delta r-j}{j}\#P_0(\Delta r-2j,\Delta t-jk)$.
Summing on all possible $j$-s, $\#P(\Delta r,\Delta t)=\sum_{j=0}^{\lfloor \frac{\Delta t}k \rfloor}\#P_j(\Delta r,\Delta t)$ and the lemma's claim follows.
\end{proof}

\begin{lemma} \label{l:calcBprime}
$$b\rq(r,t)=\sum_{i=0}^t\#P(r-\lfloor\frac{i}{k}\rfloor-1,t-i)$$
\end{lemma}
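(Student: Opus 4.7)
The plan is to unfold the recursion of Lemma~\ref{l:recFormB} and observe that its only source of a nonzero additive constant is the ``$1$'' appearing in case (ii); case (i) terminates with value $0$ and case (iii) contributes no constant. Consequently $b'(r,t)$ equals the number of times the case (ii) rule is invoked in the full expansion starting at $(r,t)$, counted with multiplicity. A state $(\tilde r,\tilde t)$ lies in case (ii) precisely when $\tilde r = \lfloor \tilde t / k \rfloor + 1$, so setting $\tilde t = i$ parameterizes exactly these potential endpoints as $i$ ranges over $\{0,1,\dots,t\}$. Each single step of the recursion replaces the current state by subtracting one tuple from $\mathbb{S}_k = \{(2,k),(1,0),\dots,(1,k-1)\}$, so every case (ii) invocation at $(\tilde r,\tilde t)$ records a sequence of tuples summing to $(r-\tilde r,\, t-\tilde t)$.

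The one subtlety is that in case (ii) the recursion branches only via $(1,1),\dots,(1,k-1)$, omitting $(2,k)$ and $(1,0)$; in principle $\#P$ might overcount by including sequences that use a missing tuple at a case (ii) intermediate. I would resolve this with a short monotonicity observation that is the main technical content of the proof. Every tuple $(r_j,t_j)\in\mathbb{S}_k$ satisfies $k r_j - t_j \ge 1$ (check: $k \cdot 2 - k = k$ for $(2,k)$, and $k \cdot 1 - i = k - i \ge 1$ for $(1,i)$ with $0 \le i \le k-1$), so $kr-t$ strictly decreases along any tuple sequence, dropping by exactly $k r_j - t_j$ at each step. The three cases of the recursion are captured by this quantity: case (iii) corresponds to $kr - t > k$, case (ii) to $1 \le kr - t \le k$, and case (i) to $kr - t \le 0$. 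Hence once a path enters case (i) it cannot leave, and in particular can never arrive at a case (ii) endpoint. Both missing moves from a case (ii) state drop $kr - t$ by exactly $k$, sending it from the range $[1,k]$ into $[1-k,\,0]$, which is case (i). Therefore no tuple sequence ending at a case (ii) endpoint uses a missing move at a case (ii) intermediate, and the restriction in the recursion does not affect the count.

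Combining these, the number of case (ii) invocations at $(\tilde r,\tilde t)$ in the expansion of $b'(r,t)$ equals the total number of tuple sequences from $\mathbb{S}_k$ summing to $(r-\tilde r,\, t-\tilde t)$, which by definition is $\#P(r-\tilde r,\, t-\tilde t)$. Summing over the parameterization $\tilde t = i \in \{0,\dots,t\}$, $\tilde r = \lfloor i/k \rfloor + 1$, yields
\[
b'(r,t) \;=\; \sum_{i=0}^{t} \#P\bigl(r - \lfloor i/k \rfloor - 1,\; t - i\bigr),
\]
which is the claim of the lemma. The main obstacle is the monotonicity observation; with it in hand, the rest is bookkeeping.
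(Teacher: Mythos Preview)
Your proposal is correct and follows essentially the same approach as the paper: unfold the recursion of Lemma~\ref{l:recFormB}, observe that the only nonzero additive contribution is the ``$1$'' in case~(ii), and count paths to case~(ii) states using tuple sequences from $\mathbb{S}_k$. Your treatment is in fact more careful than the paper's: the paper asserts without justification that the number of recursion paths to a case~(ii) node equals $\#P(\Delta r,\Delta t)$, whereas you explicitly identify and resolve the potential overcount arising from the restricted branching in case~(ii) (no $(2,k)$ or $(1,0)$ move) via the monotonicity of $kr-t$. That observation is exactly what closes the gap in the paper's own argument.
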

\begin{proof}
As mentioned in the discussion above, when expanding $b'(r,t)$, Exactly $t+1$ pairs $(\tilde r,\tilde k)$ are reached that fulfill the conditions  $ k(\tilde r-1)\le \tilde t<k\tilde r$, $0\le \tilde t \le t$ and thus contribute to the sum. These are exactly the $t+1$ pairs $(r_i,t_i)=(\lfloor\frac{i}{k}+1\rfloor,i)$ for $0\le i \le t$, and each one of them is reached $\#P(r-r_i,t-t_i)$ times. Summing all together we get $b'(r,t)=\sum_{i=0}^t\#P(r-\lfloor\frac{i}{k}+1\rfloor,t-i)$ which is equal to the lemma's claim.
\end{proof}

\begin{corollary}\label{c:subseqB}
The combined results of Lemmas \ref{l:BSplit}, \ref{l:calcBprime}, \ref{l:calcP} and \ref{l:calcP0} give an explicit expression for $|D_t(B_{r,k})|$.
We restate the results here: 

$b(r,t)=b'(r,t)+b'(r-1,t-k)$

$b'(r,t)=\sum_{i=0}^t\#P(r-\lfloor\frac{i}{k}\rfloor-1,t-i)$

$\#P(\Delta r,\Delta t)=\sum_{j=0}^{\lfloor \frac{\Delta t}k \rfloor}\binom{\Delta r-j}{j}\#P_0(\Delta r-2j,\Delta t-jk)$

$\#P_0(\Delta r, \Delta t)=\sum_{i=0}^{\lfloor \frac{\Delta t}k \rfloor}(-1)^i\binom{\Delta r}{i}\multiset{\Delta r}{\Delta t-ik}$
\end{corollary}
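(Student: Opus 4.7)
The plan is to assemble the explicit expression by chaining together the four cited lemmas in the natural order, since each lemma already furnishes one step of the reduction. I would begin with Lemma~\ref{l:BSplit}, which reduces $b(r,t)=|D_t(B_{r,k})|$ to a sum of two instances of the shifted quantity $b'$, namely $b(r,t)=b'(r,t)+b'(r-1,t-k)$. This isolates $b'$ as the single object whose closed form I need to produce.

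Next, I would invoke Lemma~\ref{l:calcBprime} to rewrite each $b'$ appearance as $b'(r,t)=\sum_{i=0}^{t} \#P(r-\lfloor i/k\rfloor-1,\,t-i)$, and analogously for $b'(r-1,t-k)$. I would then expand every $\#P(\Delta r,\Delta t)$ appearing in the sum via Lemma~\ref{l:calcP} as $\sum_{j} \binom{\Delta r-j}{j}\#P_0(\Delta r-2j,\Delta t-jk)$, and finally substitute the closed form of Lemma~\ref{l:calcP0}, $\#P_0(\Delta r,\Delta t)=\sum_{i}(-1)^i\binom{\Delta r}{i}\multiset{\Delta r}{\Delta t-ik}$. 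At that point every quantity in the nested sum is either a binomial coefficient or a multiset coefficient, which is exactly the explicit expression claimed by the corollary.

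The only real concern is ensuring consistency of boundary conventions at each substitution: the corner cases $t<0$, $t\ge kr$, and $t\in[k(r-1),kr)$ that govern the recursion of Lemma~\ref{l:recFormB} must remain compatible with the conventions $\binom{n}{i}=0$ for $i<0$ or $i>n$, $|D_t(X)|=0$ for $t>|X|$, and the empty-sum convention declared at the end of Section~\ref{sec:tools}. Because each of the four lemmas was proved under precisely these conventions, the chained substitution requires no additional argument beyond checking that invalid index ranges in the nested sums contribute zero, which is immediate from the boundary rules. Consequently the corollary follows with no new mathematical content, only the composition of the four previously established identities.
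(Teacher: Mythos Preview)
Your proposal is correct and matches the paper's approach exactly: the corollary is purely a restatement obtained by composing Lemmas~\ref{l:BSplit}, \ref{l:calcBprime}, \ref{l:calcP}, and \ref{l:calcP0}, with no additional argument beyond the boundary conventions you mention. The paper itself provides no proof for this corollary, treating it as an immediate consequence of the cited lemmas.
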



Using balanced strings we have achieved upper bounds for the number of subsequences of general strings. 
Our bound of Corollary \ref{c:subseqB} (in comparison to previous bounds) is depicted in Figure \ref{fig:compareBoundsH}.

\section{Unbalanced strings}
\label{sec:unbalanced}
In the section we define a second family of strings, named {\em unbalanced strings}. We call a string unbalanced, if all of the runs of symbols in the string are of length 1, except for one run. Let $U_{n,r}^{(i)}$ be a binary string of length $n$ with $r$ runs, in which all runs are of length 1, except for the $i^{th}$ run which is of length $n-r+1$. We notice that due to symmetry $|D_t(U^{(1)}_{n,r})|=|D_t(U^{(r)}_{n,r})|$, and define $u(n,r,t)=|D_t(U^{(1)}_{n,r})|=|D_t(U^{(r)}_{n,r})|$. We will show that these extreme cases have the least number of subsequences among the unbalanced strings, and conclude that they have the least amount of subsequences among all strings.
\begin{theorem}\label{t:lower} [Unbalanced strings have the least subsequences]
For $X=S(x_1,\dots,x_r)$, $n=\sum_{i=1}^r x_i$, and any $1\le t \le n$, $|D_t(X)|\ge u(n,r,t)$.
\end{theorem}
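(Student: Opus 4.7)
The plan is a two-step argument. First I would reduce any $r$-run string $X=S(x_1,\ldots,x_r)$ to some unbalanced string $U_{n,r}^{(i^*)}$ through a sequence of structural moves that each weakly decrease $|D_t|$. Second I would show that within the family $\{U_{n,r}^{(i)}\}_{i=1}^r$ the extreme member $U_{n,r}^{(1)}$ minimizes $|D_t|$. Chaining the two yields $|D_t(X)| \ge |D_t(U_{n,r}^{(i^*)})| \ge |D_t(U_{n,r}^{(1)})| = u(n,r,t)$.

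For the first step I would apply Lemma~\ref{l:balanceOp} in its contrapositive direction: if positions $p$ and $q$ of a string satisfy $x_p \ge x_q$, $x_q \ge 2$, and the runs strictly between them are symmetric, then transferring one unit of mass from $x_q$ to $x_p$ produces $X^*$ with $|D_t(X^*)| \le |D_t(X)|$, since the reverse move $X^* \to X$ is then a valid balancing move covered by Lemma~\ref{l:balanceOp}. Starting from $X$, fix a position $i^*$ of maximum run length, and iterate: find the position $j \ne i^*$ with $x_j \ge 2$ closest to $i^*$, and transfer one unit of mass from $x_j$ to $x_{i^*}$. The closest-choice forces every run strictly between $i^*$ and $j$ to have length exactly $1$, so the symmetry hypothesis is trivial. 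The monovariant $\sum_k x_k^2$ strictly increases at each step and is bounded by $(n-r+1)^2+(r-1)$, so after finitely many moves the string has one run of length $n-r+1$ and all others of length $1$, i.e.\ it equals $U_{n,r}^{(i^*)}$.

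For the second step, reversal symmetry gives $|D_t(U_{n,r}^{(i)})| = |D_t(U_{n,r}^{(r-i+1)})|$, so it suffices to prove $|D_t(U_{n,r}^{(i+1)})| \ge |D_t(U_{n,r}^{(i)})|$ for $1 \le i < r/2$. I would proceed by induction on $n+r$, using Lemma~\ref{l:binSplitSum} to peel off the leftmost (and/or rightmost) runs of both sides. The terms in the resulting expansions are of the form $|D_{t'}(U_{n',r'}^{(i')})|$ with strictly smaller parameters, so matching terms and applying the inductive hypothesis closes the argument. An alternative route is to derive an explicit recursive formula for $|D_t(U_{n,r}^{(i)})|$ in the style of Section~\ref{sec:upper} and read off monotonicity in $i$ directly; this would also dovetail with the analytic lower bound derived in Section~\ref{sec:lower}.

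The main obstacle will be the second step. Unlike the first reduction, no direct structural move from Section~\ref{sec:tools} converts $U_{n,r}^{(i+1)}$ into $U_{n,r}^{(i)}$: the runs adjacent to the long run already have length $1$ and cannot be shortened, so Lemma~\ref{l:balanceOp} offers no leverage, and the fixed-run-count requirement rules out insertion or deletion. The comparison must therefore be carried out by a careful recursive argument, and the delicate point is choosing a peeling direction under which the expanded summands on both sides pair cleanly into smaller instances of the same inequality.
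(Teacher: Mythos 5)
Your first step is exactly the paper's argument: fix an index $i^*$ of a maximal run, repeatedly move one unit of mass into it from the nearest run of length $\ge 2$ (so that the in-between runs are all of length $1$ and trivially symmetric), invoke Lemma~\ref{l:balanceOp} in reverse, and terminate via the monovariant $\sum_k x_k^2$. That part is complete and correct.

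The second step is where your proposal has a genuine gap, and you have correctly located it but not filled it. The paper also proceeds by induction with peeling via Lemma~\ref{l:binSplit}: for $j>2$ the first two runs of $U^{(j)}_{n,r}$ both have length $1$, so peeling gives $|D_t(U^{(j)}_{n,r})|=|D_t(U^{(j-1)}_{n-1,r-1})|+|D_{t-1}(U^{(j-2)}_{n-2,r-2})|$, the inductive hypothesis applies to both summands, and the bound reassembles because $u$ satisfies the identical recursion $u(n,r,t)=u(n-1,r-1,t)+u(n-2,r-2,t-1)$. (Note the paper compares every $U^{(j)}$ directly to the endpoint value $u(n,r,t)$ rather than proving your stronger run-by-run monotonicity in $i$; this keeps the reassembly clean and avoids boundary issues near $i\approx r/2$.) The crux is the base case $j=2$, and there ``matching terms into smaller instances of the same inequality'' fails structurally: peeling $U^{(2)}_{n,r}=S(1,n-r+1,(r-2)\times 1)$ gives a first summand that is an unbalanced string, but the second summand is $|D_{t-1}(S(n-r,(r-2)\times 1))|$, a string with $r-1$ runs, which must be compared against $|D_{t-1}(S(n-r+1,(r-3)\times 1))|$, a string with $r-2$ runs. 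No instance of the target inequality (which always compares two strings with the same number of runs) covers this pair. The paper closes it with the flipping operation: writing $S(n-r+1,(r-3)\times 1)=U\sigma\sigma V$ with the split inside the long run, Lemma~\ref{l:flip} gives $|D_{t-1}(S(n-r+1,(r-3)\times 1))|\le|D_{t-1}(S(n-r,(r-2)\times 1))|$, which is exactly the missing comparison. Your proposal never invokes Lemma~\ref{l:flip}, and without it (or an equivalent tool for comparing strings whose run counts differ by one) the induction has no valid base case.
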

\begin{proof}
First we will prove that there exists $j$ s.t.   $|D_t(X)|\ge |D_t(U^{(j)}_{n,r})|$, for all $t$. We notice that the balancing operation of Lemma \ref{l:balanceOp} can be used in the other direction, as an unbalancing operation. We will transform the string $X$ into a string $U^{(j)}_{n,r}$ by repeatably applying the unbalancing operation. each such step can only decrease the number of subsequences, so by constructing a series of such operations, we will prove that $D_t(X)\ge |D_t(U^{(j)}_{n,r})|$. Let $j$ be the index of a maximal run in $X$. We will construct a series of strings, $X_0,\dots,X_m$, such that $X_0=X$, $X_m=U^{(j)}_{n,r}$ and for any $0\le i<m$, $|D_t(X_i)|\ge|D_t(X_{i+1})|$. For any $i<m$, we denote $X_i=S(x^{(i)}_1,\dots,x^{(i)}_r)$. We choose an index $p\ne j$ s.t. $x^{(i)}_p>1$ and all runs between the $j^{th}$ run and the $p^{th}$ run are all of length 1. Such an index exists, otherwise $X_i$ is already an unbalanced string. We define $X_{i+1}$ to be the string obtained from $X_i$ by increasing the $j^{th}$ run by 1, and decreasing the $p^{th}$ run by 1. Since $x_j$ was the maximal run in $X$ and each operation only made it bigger while all other runs could only shorten, we have that $x^{(i)}_j\ge x^{(i)}_p$. The runs between the  $j^{th}$ run and the $p^{th}$ run are all of length 1, and so trivially symmetric, and so the conditions of the reverse Lemma  \ref{l:balanceOp} holds, and $|D_t(X_i)|\ge|D_t(X_{i+1})|$.
 
To complete the proof we will prove that for any $j$, $|D_t(U^{(j)}_{n,r})|\ge u(n,r,t)$. For $j=1$, $u(n,r,t)=|D_t(U^{(1)}_{n,r})|$ by definition. For $j\ge 2$ we will prove by induction on $j$. For $j=2$,  $|D_t(U^{(2)}_{n,r})|=|D_t(1,n-r+1,(n-2)\times 1)|$. using Lemma \ref{l:binSplit} we get  $|D_t(U^{(2)}_{n,r})|= |D_t(n-r+1,(n-2)\times 1)|+ |D_{t-1}(n-r,(n-2)\times 1)|$. We compare this to $u(n,r,t)=|D_t((n-1)\times 1,n-r+1)|=|D_t((n-2)\times 1,n-r+1)|+|D_t((n-3)\times 1,n-r+1)|$. Using the flipping Lemma \ref{l:flip} on the second addend and symmetry on both, we get $u(n,r,t)\le|D_t(n-r+1,(n-2)\times 1)|+|D_t(n-r,(n-2)\times 1)|=|D_t(U^{(2)}_{n,r})|$. 
For the induction step, we assume that the claim is true for $2,\dots,j-1$ and prove it for $j$. for $j>2$, using Lemma \ref{l:binSplit} we get $|D_t(U^{(j)}_{n,r})|=|D_t(U^{(j-1)}_{n-1,r-1})|+|D_{t-1}(U^{(j-2)}_{n-2,r-2})|$. Using the induction assumption on both addends, we get that $|D_t(U^{(j)}_{n,r})|\ge |u(n-1,r-1,t)+u(n-2,r-2,t)$ and using Lemma \ref{l:binSplit} again, the last sum is equal to $u(n,r,t)$ and thus the induction step is proved. An example of the unbalancing process is displayed in Table II.
\end{proof}

\begin{table}
\caption{Example of a balancing process as defined in the proof of Theorem \ref{t:lower}}
\begin{center}
\begin{tabular}{ l c c c}
$i$ & $X_{i}$ & runs & $D_5(X_i)$  \\ 
0 & 0011100111100 & {2,3,2,4,2} & 60\\
1 & 0011101111100 & {2,3,1,5,2} & 38\\
2 & 0011101111110 & {2,3,1,6,1} & 26\\
3 & 0011011111110 & {2,2,1,7,1} & 20\\
4 & 0010111111110 & {2,1,1,8,1} & 14\\
5 & 0101111111110 & {1,1,1,9,1} & 10\\ \hline
   & 1111111110101 & {9,1,1,1,1} & 8\\
\end{tabular}
\end{center}
\end{table}

\section{Our lower bound}
\label{sec:lower}

In this section we develop a recursive expression for the number of subsequences of an unbalanced string by deletions. We will find an explicit form for this expression, and use it to obtain a lower bound on the number of subsequences of a general string. In addition, we will show the improvement that our lower bound provides.

\subsection{Recursive expression}

\begin{lemma}\label{l:URecursion}
For all $0<r\le n$, $0<t<n$,
\[
  u(n,r,t)=
  \begin{cases}
     r & \text{if}\; r=1,2 \\
     2 & \text{if} \; r>1 \text{ and } t=n-1 \\
     d(n,t) & \text{if}\; n=r \\
     \\[1ex]
     \begin{aligned}
     u&(n-1,r,t)+ \\
     &d(r-2,t+r-n-1) 
     \end{aligned} & \text{otherwise}
  \end{cases}
\]

Where $d(r,t)=|D_t(C_r)|=\sum_{i=0}^t\binom{r-t}i$, as proved in \cite{Hirsch}.
We assume $d(n,0)=1$, and for $t<0$, $d(n,t)=0$.
\end{lemma}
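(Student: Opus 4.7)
The plan is to prove the recursion by a case analysis following the four branches in the statement: the first three are treated as base cases and the fourth as a structural recursion obtained from Lemma~\ref{l:binSplit}(i).

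For the base cases, I would argue directly. When $r=1$, $U^{(1)}_{n,1} = 0^n$ has the single subsequence $0^{n-t}$, so $u(n,1,t)=1$. When $r=2$, $U^{(1)}_{n,2} = 0^{n-1}1$, and a length-$(n-t)$ subsequence is either $0^{n-t-1}1$ or $0^{n-t}$, giving $u(n,2,t)=2$. When $r>1$ and $t=n-1$, the remaining single character can be either $0$ or $1$ since both symbols appear in the string, giving $2$. When $n=r$, every run has length $1$, so $U^{(1)}_{n,n}=C_n$ and the count is $d(n,t)$, which is exactly the Hirschberg--Regnier formula $\sum_{i=0}^t\binom{n-t}{i}$ cited in the statement.

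The main step is the ``otherwise'' case, where $r\ge 3$, $t<n-1$, and $n>r$. Here $U^{(1)}_{n,r} = 0^{n-r+1}\cdot 1\cdot Z$, where $Z = 010\cdots$ is the alternating string of length $r-2$, i.e.\ $Z = C_{r-2}$. This matches the hypothesis of Lemma~\ref{l:binSplit}(i) with $\sigma=0$, $\epsilon=1$, $i=n-r+1\ge 2$, $j=1$, and tail $Z$, so
\[
|D_t(U^{(1)}_{n,r})| \;=\; |D_t(0^{n-r}\cdot 1\cdot Z)| \;+\; |D_{t-(n-r+1)}(Z)|.
\]
The first summand equals $u(n-1,r,t)$ because $0^{n-r}\cdot 1\cdot Z$ is literally $U^{(1)}_{n-1,r}$ (valid since $n>r$ forces $n-r\ge 1$, so the first run is still nonempty). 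The second summand equals $|D_{t-n+r-1}(C_{r-2})| = d(r-2,\,t+r-n-1)$ by the cited cyclic-string formula. This yields the recurrence in the fourth branch.

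The one subtle point, which I expect to be the main bookkeeping obstacle, is making sure that the preconditions $i,j>0$ of Lemma~\ref{l:binSplit}(i) genuinely hold in the ``otherwise'' branch and that the tail $Z$ is exactly $C_{r-2}$ so the Hirschberg--Regnier expression $d(r-2,\cdot)$ applies as written. The condition $n>r$ guarantees the first run of $U^{(1)}_{n,r}$ has length $\ge 2$, and $r\ge 3$ guarantees $Z$ is a nonempty cyclic string; once these are checked, the four cases partition all inputs with $0<r\le n$ and $0<t<n$ and the recursion follows. The edge convention $d(0,0)=1$, $d(0,t)=0$ for $t>0$ also makes the formula consistent with the degenerate tail $Z=\varepsilon$ that would arise if one tried to apply the general step at $r=2$.
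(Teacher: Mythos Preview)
Your proposal is correct and follows essentially the same approach as the paper: the base cases are handled identically, and for the recursive step both you and the paper apply Lemma~\ref{l:binSplit} to $U^{(1)}_{n,r}=0^{n-r+1}1Z$ to peel one symbol off the long first run, identifying the two resulting terms as $u(n-1,r,t)$ and $d(r-2,t+r-n-1)$. Your write-up is in fact more explicit than the paper's about verifying the preconditions $i,j>0$ and the identification $Z=C_{r-2}$.
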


\begin{proof}
\begin{itemize}
\item When $r=1$, $U_{n,r}$ is a constant string, and has only one possible subsequence (the constant string of length $n-t$).
\item When $r=2$, $U_{n,r}$ is of the form $\sigma\epsilon^{n-1}$, and has two possible subsequences, namely $\sigma\epsilon^{n-1-t}$ and $\epsilon^{n-t}$.
\item When $t=n-1$ and $r>1$ any subsequence is a single symbol. Since $r>1$, it can be either symbol of the binary alphabet.
\item When $n=r$, $U_{n,r}=C_n$, the binary cyclic string of length $n$. $|D_t(C_n)|$=$d(n,t)$ by definition.
\item In the other cases ($2<r<n$, $t<n-1$), we regard $U^{(1)}_{n,r}$ (``tail first''). We Apply Lemma \ref{l:binSplit} and get $|D_t(U^{(1)}_{n,r})|=|D_t(U_{n-1,r}^{(1)})|+|D_{t+r-n-1}(C_{r-2})| =u(n-1,r,t)+d(r-2,t+r-n-1) $.
\end{itemize}
\end{proof}

\subsection{Solving the recursion}
\begin{theorem}\label{t:UClosedForm} {[Closed form formula for $u(n,r,t)$]}
For all $t<n$, $2 < r\le n$,
\begin{itemize}
\item[(i)] when $r>t$: 
$$
u(n,r,t)=d(r,t)+\sum_{i=t+r-n-1}^{t-2}d(r-2,i).
$$.
\item[(ii)] when $r\le t$:
$$
u(n,r,t)=2+\sum_{i=t+r-n-1}^{r-3}d(r-2,i).
$$.
\end{itemize}
\end{theorem}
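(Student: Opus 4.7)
The plan is to prove Theorem~\ref{t:UClosedForm} by induction on $n$, with $r$ and $t$ held fixed (under the standing assumptions $2<r\le n$ and $t<n$). The recursion of Lemma~\ref{l:URecursion} is designed so that going from $n-1$ to $n$ adds precisely one summand to the formula, matching the extra term $d(r-2,t+r-n-1)$ produced by the recursion.

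\emph{Base case $n=r$.} Here $t<r$, so we are in case (i). By the third branch of the recursion, $u(r,r,t)=d(r,t)$. The summation in formula~(i) runs from $i=t+r-n-1=t-1$ to $i=t-2$, which is empty, so the right-hand side collapses to $d(r,t)$. Case (ii) is vacuous at the base since it would require $r\le t<r$.

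\emph{Inductive step.} Fix $n>r$ and assume the formula for $u(n-1,r,t)$ whenever $2<r\le n-1$ and $t<n-1$. I split on whether $t$ hits the special branch $t=n-1$ of the recursion.

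If $t=n-1$, the recursion yields $u(n,r,t)=2$. I verify that both formula branches also give $2$: in case (i) we must have $r>t=n-1$, forcing $r=n$, and the sum $\sum_{i=r-2}^{n-3}d(r-2,i)$ is empty, while $d(r,t)=d(n,n-1)=2$; in case (ii) the sum $\sum_{i=r-2}^{r-3}d(r-2,i)$ is empty, leaving the constant $2$.

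If instead $t<n-1$, then since $n>r$ the recursion gives
\[
u(n,r,t)=u(n-1,r,t)+d(r-2,t+r-n-1),
\]
and the induction hypothesis applies to $u(n-1,r,t)$ (note $r\le n-1$ and $t<n-1$). Crucially, the case split ((i) vs.\ (ii)) depends only on whether $r>t$ or $r\le t$, so it is the same for $n$ and for $n-1$. In case (i) the hypothesis gives $u(n-1,r,t)=d(r,t)+\sum_{i=t+r-n}^{t-2}d(r-2,i)$; adding $d(r-2,t+r-n-1)$ simply prepends the index $i=t+r-n-1$ to the sum, yielding formula~(i) at $n$. In case (ii) the hypothesis gives $u(n-1,r,t)=2+\sum_{i=t+r-n}^{r-3}d(r-2,i)$, and the same index-prepending argument yields formula~(ii).

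The only subtlety I anticipate is ensuring the boundary $t=n-1$ is handled separately (since the recursion's second branch overrides the ``otherwise'' branch) and confirming that the induction hypothesis is legitimately available in the step where $t=(n-1)-1$; this is exactly where the $t=n'-1$ branch of the recursion at level $n'=n-1$ supplies the value $2$, which matches the constant term in formula~(ii). Once these small consistency checks are made, the induction closes and the closed-form expression follows directly from the telescoping behavior of the recursion.
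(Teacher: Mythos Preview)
Your proof is correct and follows essentially the same approach as the paper: the paper unrolls the recursion of Lemma~\ref{l:URecursion} iteratively (``after $j$ expansions \ldots'') until hitting the boundary $n'=r$ in case~(i) or $n'=t+1$ in case~(ii), which is exactly your induction on $n$ phrased computationally. Your explicit separation of the $t=n-1$ subcase inside the inductive step is a slightly more careful treatment of what the paper leaves implicit, but the underlying argument is the same telescoping.
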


\begin{proof}
We sequentially expand $u(n,r,t)$ using Lemma \ref{l:URecursion}, until reaching one of the boundary conditions. After one such expansion we get $u(n,r,t)=u(n-1,r,t)+d(r-2,t+r-n-1)$, after $j$ expansions (assuming the boundary conditions weren't reached) we get $u(n,r,t)=u(n-j,r,t)+\sum_{i=t+r-n-1}^{t+r-n+j-2}d(r-2,i)$. We notice that $i=t+r-n+j-2$ can be negative, and in these cases $d(r-2,i)$ is defined to be zero.
When $r>t$, after $n-r$ steps we get $u(n,r,t)=u(r,r,t)+\sum_{i=t+r-n-1}^{t-2}d(r-2,i)$, and as $u(r,r,t)=d(r,t)$ we get (i) above. When $r\le t$, after $n-t-1$ steps we get $u(n,r,t)=u(t+1,r,t)+\sum_{i=t+r-n-1}^{r-3}d(r-2,i)=2+\sum_{i=t+r-n-1}^{r-3}d(r-2,i)$ and we get (ii) above.
\end{proof}

We notice that when the number of deletions is no greater than $n-r+1$ the expression of $u(n,r,t)$ does not depend on $n$, as stated in the following corollary:
\begin{corollary} 
For $2<r\le n$ and $t  \leq n-r+1$:
\begin{itemize}
\item[(i)] when $r>t$: 
$$
u(n,r,t)=d(r,t)+\sum_{i=0}^{t-2}d(r-2,i).
$$
\item[(ii)] when $r\le t$:
$$
u(n,r,t)=2+\sum_{i=0}^{r-3}d(r-2,i)=1+2^{r-2}.
$$
\end{itemize}
\end{corollary}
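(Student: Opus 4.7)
The plan is to derive this corollary as a direct specialization of Theorem \ref{t:UClosedForm} under the extra hypothesis $t \le n-r+1$, which is equivalent to the inequality $t+r-n-1 \le 0$. Invoking Theorem \ref{t:UClosedForm}, the lower summation index $t+r-n-1$ appearing there becomes non-positive, and the bulk of the work is simply to recognize what this forces on the sum.

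The first step is to appeal to the conventions recorded alongside Lemma \ref{l:URecursion}: $d(r-2,i)=0$ whenever $i<0$, and $d(r-2,0)=1$. Under these conventions, replacing the lower summation index by $0$ only adjoins terms that are identically zero and does not change the value of the sum. In case (i) this rewrites the conclusion of Theorem \ref{t:UClosedForm} as $u(n,r,t) = d(r,t) + \sum_{i=0}^{t-2} d(r-2,i)$, and in case (ii) as $u(n,r,t) = 2 + \sum_{i=0}^{r-3} d(r-2,i)$. This already yields the first equalities in both parts of the corollary and, in particular, makes manifest the assertion that $u(n,r,t)$ is independent of $n$ in this regime, which is the primary content of the result.

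For the closed-form expression $1+2^{r-2}$ in case (ii), it remains to verify the purely combinatorial identity $\sum_{i=0}^{r-3} d(r-2,i) = 2^{r-2}-1$. Substituting $d(r-2,i) = \sum_{j=0}^{i}\binom{r-2-i}{j}$ turns this into a double binomial sum. My plan is to swap the order of summation, reindex via $k = r-2-i$, and collapse the inner sum using the hockey-stick identity $\sum_{k=j}^{N}\binom{k}{j} = \binom{N+1}{j+1}$, producing a single sum of binomial coefficients to be matched against $\sum_{\ell=0}^{r-3}\binom{r-2}{\ell} = 2^{r-2}-1$.

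The main obstacle I anticipate is precisely this last step. Although the reduction to a binomial-coefficient identity is routine, the algebra is delicate and easy to get wrong by an off-by-one, so I would track the indices very carefully. I would also keep in mind that a different closed form (for instance a Fibonacci-type expression, since similar run-based counts over alternating strings are known to produce such quantities) could emerge naturally from the double sum and require reconciliation with the stated right-hand side $1+2^{r-2}$ before the proof of case (ii) is complete.
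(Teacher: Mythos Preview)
Your derivation of part~(i) and of the first equality in part~(ii) is correct and is precisely how the corollary is meant to follow from Theorem~\ref{t:UClosedForm}: the hypothesis $t\le n-r+1$ makes the lower summation index $t+r-n-1$ nonpositive, and the convention $d(\cdot,i)=0$ for $i<0$ then allows you to replace it by $0$. The paper offers no separate argument beyond recording this as a corollary, so your write-up matches.

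Your caution about the second equality in~(ii) is well placed, and in fact the identity $2+\sum_{i=0}^{r-3}d(r-2,i)=1+2^{r-2}$ as printed is \emph{incorrect} for $r\ge 5$. At $r=5$ one has $d(3,0)+d(3,1)+d(3,2)=1+3+2=6$, so the left side equals $8$, while $1+2^{3}=9$; a direct enumeration of the length-$5$ subsequences of $U_{10,5}^{(1)}=0^{6}1010$ confirms $u(10,5,5)=8$. Your hockey-stick plan would therefore fail to reach $2^{r-2}-1$, not because of any flaw in the method but because that is not the value of the sum. The actual values $3,5,8,13,\dots$ for $r=3,4,5,6,\dots$ are the Fibonacci numbers $F_{r+1}$, exactly the alternative you anticipated. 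In short, your proof of the two substantive equalities is sound; the remaining closed form is a slip in the paper rather than a gap in your argument.
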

\subsection{Improving known lower bounds on number of subsequences}\label{ss:preImprove}
The results of Theorem \ref{t:lower} together with Theorem \ref{t:UClosedForm} lead to the following:

\begin{theorem}\label{t:lowerbound}{[lower bound on the number of subsequences]}
For all $t<n$, $2 < r\le n$ and any $r$-run string $X$
\[
|D_t(X)|\ge d(r,t)+\sum_{i=t+r-n-1}^{\min(t-2,r-3)}d(r-2,i).
\]
\end{theorem}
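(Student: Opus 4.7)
The plan is to derive this bound as an immediate consequence of the two earlier results in the unbalanced-string thread: Theorem \ref{t:lower}, which states that $u(n,r,t)$ is a universal lower bound on $|D_t(X)|$ over all $r$-run strings of length $n$, and Theorem \ref{t:UClosedForm}, which gives a closed-form evaluation of $u(n,r,t)$ split into two regimes. It therefore suffices to show that the target right-hand side never exceeds $u(n,r,t)$.

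I would carry out the verification in two cases matching the two cases of Theorem \ref{t:UClosedForm}. When $r > t$, the inequality $t-2 \leq r-3$ forces $\min(t-2, r-3) = t-2$, so the target expression is exactly the formula of Theorem \ref{t:UClosedForm}(i); here equality holds and the claim follows immediately from Theorem \ref{t:lower}. When $r \leq t$, we have $\min(t-2, r-3) = r-3$, so the target right-hand side becomes $d(r,t) + \sum_{i=t+r-n-1}^{r-3} d(r-2,i)$, while Theorem \ref{t:UClosedForm}(ii) gives $u(n,r,t) = 2 + \sum_{i=t+r-n-1}^{r-3} d(r-2,i)$. Thus the only non-trivial step is to check $d(r,t) \leq 2$ in this regime. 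This is immediate from the paper's conventions: $d(r,t) = 1$ when $r = t$ (the cyclic string of length $r$ under $r$ deletions leaves only the empty string) and $d(r,t) = 0$ when $r < t$, since then $\binom{r-t}{i} = 0$ for $r - t < 0$ and $i \geq 0$.

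I do not anticipate a substantive obstacle: the proof is essentially a bookkeeping argument chaining together two already-proved results. The only points requiring care are handling the convention $d(r-2, i) = 0$ for $i < 0$, which can occur in the sum when $n$ is large relative to $t + r$, and observing that in the regime $r \leq t$ the unified expression is in fact slightly weaker than $u(n,r,t)$ itself (by the additive constant $2 - d(r,t) \in \{1,2\}$); this means the stated bound is valid but not tight at that boundary, which is a conscious simplification in favor of a single uniform formula across both regimes.
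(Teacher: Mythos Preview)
Your proposal is correct and follows exactly the route the paper takes: the paper simply states that Theorem~\ref{t:lowerbound} is obtained by combining Theorem~\ref{t:lower} with Theorem~\ref{t:UClosedForm}, without spelling out the case analysis. Your write-up in fact supplies more detail than the paper does, correctly handling the $r>t$ case (where the bound equals $u(n,r,t)$) and the $r\le t$ case (where $d(r,t)\le 1<2$ under the paper's conventions, so the stated bound is a slight weakening of $u(n,r,t)$).
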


We compare this result to the previous result by Hirschberg et al. $|D_t(X)|\ge d(r,t) = \sum_{i=0}^t\binom{r-t}i$ \cite{Hirsch}. We limit the comparison to $t \leq r$ as for $t>r$ the previous bound gives 0.

\begin{lemma}
Let $\alpha=t/r$. for $\alpha \in [\frac 13+\frac 1 r,1)$ and for $t\le n-r+1$, $\frac{u(n,r,t)}{d(n,t)}=\Omega\left(\frac{\sqrt{1-\alpha}}{r\alpha}2^{r(\alpha-\frac 1 3)}\right)$.
\end{lemma}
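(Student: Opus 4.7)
The plan is to apply Theorem~\ref{t:UClosedForm}(i) directly and then bound the resulting expression for $u(n,r,t)$ from below by a single well-chosen term of the sum. Since $\alpha<1$ gives $r>t$, part (i) of Theorem~\ref{t:UClosedForm} applies, and the hypothesis $t\le n-r+1$ forces the lower summation index $t+r-n-1\le 0$; together with the convention $d(r-2,i)=0$ for $i<0$, this yields
\[
u(n,r,t)\;=\;d(r,t)\,+\,\sum_{i=0}^{t-2}d(r-2,i)\;\ge\;d(r,t)\,+\,d(r-2,i^{*})
\]
for any $i^{*}\in[0,t-2]$ I care to fix (I intend to take $i^{*}\approx r/3$). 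Note that, as recorded in the corollary immediately preceding the statement, the right-hand side is independent of $n$, so the lemma is essentially a statement about $r$ and $t$; I will read the `$d(n,t)$' in the denominator as the Hirschberg lower bound $d(r,t)$ used in the comparison set up in Subsection~\ref{ss:preImprove}.

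Next I would pick $i^{*}=\lfloor r/3\rfloor-1$, which lies in $[0,t-2]$ thanks to $t\ge r/3+1$ (equivalent to $\alpha\ge 1/3+1/r$). With this choice $m:=r-2-i^{*}\approx 2r/3$ and $i^{*}\approx m/2$, so the sum $d(r-2,i^{*})=\sum_{j=0}^{i^{*}}\binom{m}{j}$ ranges up to (essentially) the midpoint of the binomial coefficients $\binom{m}{\cdot}$. By the symmetry $\binom{m}{j}=\binom{m}{m-j}$, such a half-sum is at least $2^{m-1}$, giving
\[
d(r-2,i^{*})\;\ge\;\tfrac{1}{2}\cdot 2^{m}\;=\;\Omega\!\bigl(2^{2r/3}\bigr).
\]
Combined with the elementary upper bound $d(r,t)=\sum_{j=0}^{t}\binom{r-t}{j}\le 2^{r-t}=2^{r(1-\alpha)}$, this already yields
\[
\frac{u(n,r,t)}{d(r,t)}\;\ge\;\frac{d(r-2,i^{*})}{d(r,t)}\;=\;\Omega\!\bigl(2^{r(\alpha-1/3)}\bigr),
\]
which captures the exponential content of the lemma.

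The remaining polynomial prefactor $\sqrt{1-\alpha}/(r\alpha)$ I would then extract by sharpening both of the crude estimates above via Stirling's formula: expressing the half-sum more precisely through the central binomial $\binom{m}{\lfloor m/2\rfloor}\sim 2^{m}/\sqrt{\pi m/2}$, and, in the regime $\alpha<1/2$, replacing $d(r,t)\le 2^{r-t}$ by the tighter entropy estimate on a partial binomial sum whose Stirling prefactor scales like $\sqrt{1-\alpha}/(r\alpha)$. I expect the main obstacle to lie precisely in these Stirling corrections: one must verify that the polynomial factor behaves correctly across the full range $\alpha\in[1/3+1/r,1)$, in particular near the two endpoints---at $\alpha\approx 1/3$ the index $i^{*}$ sits essentially at $t-2$ and the partial sum only barely reaches its midpoint, while at $\alpha\approx 1$ the factor $\sqrt{1-\alpha}$ degenerates and $d(r,t)=2^{r-t}$ becomes small, so the polynomial slack must be threaded carefully through both limits. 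The exponential core of the argument, however, is already captured by the simple single-term lower bound above.
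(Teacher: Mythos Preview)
Your approach is essentially the paper's: bound $u(n,r,t)$ below by a single term $d(r-2,i^\ast)$ with $i^\ast\approx r/3$, bound $d(r,t)$ above, and take the ratio. Two remarks.

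First, the half--sum claim is slightly off. With $i^\ast=\lfloor r/3\rfloor-1$ and $m=r-2-i^\ast=r-1-\lfloor r/3\rfloor$ one has $m-i^\ast=r-2\lfloor r/3\rfloor\ge r/3$, so $i^\ast$ sits \emph{just below} $m/2$ rather than at or above it, and $\sum_{j\le i^\ast}\binom{m}{j}\ge 2^{m-1}$ need not hold (e.g.\ $r=10$ gives $i^\ast=2$, $m=6$, sum $=22<32$). The fix is exactly what the paper does: drop to the single term $\binom{m}{i^\ast}$ and use Stirling, $\binom{m}{\lfloor m/2\rfloor}=\Theta(2^m/\sqrt m)$, to get $d(r-2,i^\ast)=\Omega(2^{2r/3}/\sqrt r)$.

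Second, your worry about the polynomial prefactor is a red herring. Since $\alpha\ge 1/3$ and $\sqrt{1-\alpha}\le 1$, the target prefactor satisfies $\sqrt{1-\alpha}/(r\alpha)\le 3/r$, which is \emph{weaker} than the $1/\sqrt r$ you already get from the crude estimates $u=\Omega(2^{2r/3}/\sqrt r)$ and $d(r,t)\le 2^{r-t}$. No further Stirling sharpening is needed; your ``exponential core'' argument (once the half--sum is repaired) already implies the lemma. In fact the paper's prefactor $\sqrt{1-\alpha}/(r\alpha)$ does not come from any sharpening at all---it arises because the paper bounds $d(r,t)$ more loosely, by $(t+1)\binom{r-t}{\lfloor(r-t)/2\rfloor}=O\!\bigl(t\,2^{r-t}/\sqrt{r-t}\bigr)$, which contributes the extra $t/\sqrt{r-t}=r\alpha/\sqrt{r(1-\alpha)}$ to the denominator. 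Your bound $d(r,t)\le 2^{r-t}$ is simply tighter.
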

\begin{proof}
$d(r,t)\le(t+1)\max_{i=0}^t\binom{r-t}i$. The series $\binom{r-t}i$ reaches its maximum at $i=\lfloor(r-t)/2\rfloor$. This value is reached, because $t>r/3$ implies that $t>(r-t)/2$. Thus  $d(r,t)\le(t+1)\binom{r-t}{\lfloor(r-t)/2\rfloor}$. Stirling's approximation implies that $\binom{a}{\lfloor a/2\rfloor}=\Theta(\frac{2^a}{\sqrt{a}})$, and thus we get $d(r,t)=O(\frac{t}{\sqrt{r-t}}2^{r-t})$.

On the other hand as $t-2\ge\lfloor\frac{r-2}3\rfloor$, $u(r,t)\ge d(r-2,\lfloor\frac{r-2}3\rfloor)\ge \binom{\lfloor\frac{2}{3}(r-3)\rfloor}{\lfloor\frac{1}{3}(r-3)\rfloor}=\Theta(\frac{2^{\frac{2}{3}(r-3)}}{\sqrt{\frac{2}{3}(r-3)}})=\Theta(\frac{2^{\frac{2}{3}r}}{\sqrt{r}})$, thus  $u(r,t)=\Omega(\frac{2^{\frac{2}{3}r}}{\sqrt{r}})$. \\
$\frac{u(n,r,t)}{d(n,t)}=\Omega(\frac{2^{\frac{2}{3}r} \sqrt{r-t}}{2^{r-t}t\sqrt{r}})$,  thus  $\frac{u(n,r,t)}{d(n,t)}=\Omega(\frac{\sqrt{1-\alpha}}{r\alpha}2^{r(\alpha-\frac 1 3)})$
\end{proof}

For large enough strings ($n>t+r$), the improvement that the bound in Theorem~\ref{t:lowerbound} gives over the result in \cite{Hirsch} depends on the ratio between $r$ and $t$. We depict our improved results in Figure \ref{fig:compareBoundsL}. 

\section{Deletion patterns}
\label{sec:dp}
Consider a string $X$. Deletion of $t$ letters from $X$ can be characterized by partitioning $t$ into the number of letters deleted from each run, leading to the following definition of deletion patterns.

\begin{definition}
Let $X$ be a string s.t. $X=S(x_1,\dots,x_r)$. A deletion pattern of size $t$, is a set of integers $\{y_1,\dots\,y_r\}$ fulfilling $\sum_{i=1}^r y_i = t$ and for all $0\le i \le r$, $y_i \in [0,x_i]$. Each $y_i$ represents the number of letters deleted from the $i$-th run of  $X$. E.g. the deletion pattern $\{2,1,2\}$ for the string $000110000$ results in the subsequence $0100$. let $\mathcal{P}_t(X)$ denote the set of deletion patterns of size $t$ for the string $X$.
\end{definition}

It is important to notice that applying different deletion patterns on a string can result in the same subsequences, E.g. For the string $11011$, the deletion patterns $\{1,1,0\}$ and $\{0,1,1\}$ both result in the subsequence $111$. 
The following lemma ties deletion patters with the study of subsequences (and appears partially in \cite{Bhar}).

\begin{lemma}\label{t:DPSandwitch}
For any  $X=S(x_1,\dots,x_r)$, let $X'$ denote the string $S(x_1-1,\dots,x_r-1)$. Informally $X'$ is the string obtained by deleting one letter from each run in $X$.  It follows that $|\mathcal{P}_t(X')|\le|D_t(X)|\le |\mathcal{P}_t(X)|$.
\end{lemma}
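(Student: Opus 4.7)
The plan is to establish the two inequalities by constructing appropriate maps between deletion patterns and actual subsequences. The upper bound will follow from a surjection, while the lower bound will follow from an injection going in the other direction.

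For the upper bound $|D_t(X)|\le|\mathcal{P}_t(X)|$, I would consider the natural "apply the pattern" map $\psi:\mathcal{P}_t(X)\to D_t(X)$ that sends a deletion pattern $(y_1,\dots,y_r)$ to the subsequence obtained by removing $y_i$ characters from the $i$-th run of $X$ (it does not matter which $y_i$ characters, since the run is uniform). Every element of $D_t(X)$ arises as the result of some sequence of $t$ deletions, which induces such a pattern, so $\psi$ is surjective. The example $11011$ mentioned above shows that $\psi$ need not be injective, but surjectivity alone gives $|D_t(X)|\le|\mathcal{P}_t(X)|$.

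For the lower bound $|\mathcal{P}_t(X')|\le|D_t(X)|$, I would construct an injection $\phi:\mathcal{P}_t(X')\to D_t(X)$. Let $(y_1,\dots,y_r)\in\mathcal{P}_t(X')$, so $0\le y_i\le x_i-1$ and $\sum y_i=t$. Since $y_i\le x_i-1<x_i$, this tuple is also a valid deletion pattern for $X$, and applying it produces a subsequence $\phi(y)\in D_t(X)$ of length $|X|-t$. The key structural observation is that because $y_i<x_i$ for every $i$, at least one symbol survives from each run of $X$, so $\phi(y)$ has exactly $r$ runs, the alternation pattern (starting with $0$) is identical to that of $X$, and the $i$-th run of $\phi(y)$ has length exactly $x_i-y_i$. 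This lets us read $(y_1,\dots,y_r)$ off from $\phi(y)$ unambiguously, so $\phi$ is injective, yielding $|\mathcal{P}_t(X')|\le|D_t(X)|$.

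The main obstacle is really just the verification for $\phi$: one must be careful that the constraint $y_i\le x_i-1$ (rather than the looser $y_i\le x_i$ used for $X$) is what forces each run to retain at least one symbol, and hence guarantees that the run-length vector of the resulting subsequence is well-defined and recovers the input pattern uniquely. Everything else reduces to bookkeeping about run lengths and lengths of deletions, with no recursion or induction needed.
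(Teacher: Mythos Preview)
Your proof is correct and follows essentially the same approach as the paper's: a surjection $\mathcal{P}_t(X)\to D_t(X)$ for the upper bound, and the observation that patterns in $\mathcal{P}_t(X')$ leave every run nonempty and hence biject with the $r$-run subsequences in $D_t(X)$ for the lower bound. You are slightly more explicit than the paper in naming the maps and spelling out why the run-length vector of $\phi(y)$ recovers $(y_1,\dots,y_r)$, but the substance is identical.
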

\begin{proof}
Deleting letters from a given string according to a deletion pattern is a deterministic process, and so each deletion pattern yields exactly one subsequence, thus the right inequality follows.
As mentioned before, several deletion patterns can yield the same subsequence, but this redundancy doesn't exist with deletion patterns that preserve the number of the runs in a string, i.e. there isn't a run in which all the symbols are deleted. In this case it is possible to reconstruct the deletion pattern from the subsequence in a unique way, and there is a one-one correspondence between the deletion patterns and the subsequences. The group of deletion patterns of $X$ that preserve the number of runs is exactly the group of deletion patterns in which at least one symbol is not deleted from each run, and is equal to $\mathcal{P}_t(X')$. This group has a one-one correspondence to the subset of $D_t(X)$ of strings with exactly $r$ runs, and thus the left equality holds. 
\end{proof}
\begin{lemma}\label{l:DPSymmetry}
For any $X$, $|\mathcal{P}_t(X)|=|\mathcal{P}_{|X|-t}(X)|$.
\end{lemma}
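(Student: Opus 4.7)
The plan is to exhibit an explicit bijection between $\mathcal{P}_t(X)$ and $\mathcal{P}_{n-t}(X)$, where $n=|X|$. The natural candidate is the ``complement'' map that records, run-by-run, the symbols \emph{kept} rather than those deleted.

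Concretely, writing $X=S(x_1,\dots,x_r)$, I would define
\[
\phi:\mathcal{P}_t(X)\to\mathcal{P}_{n-t}(X),\qquad \phi(y_1,\dots,y_r)=(x_1-y_1,\dots,x_r-y_r).
\]
The first step is to check well-definedness: since $0\le y_i\le x_i$ we also have $0\le x_i-y_i\le x_i$, so each coordinate lies in its allowed range, and $\sum_{i=1}^r(x_i-y_i)=n-\sum_{i=1}^r y_i=n-t$, so the image is indeed a deletion pattern of size $n-t$ for $X$.

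The second step is to observe that $\phi$ is an involution: applying the same rule to $(x_1-y_1,\dots,x_r-y_r)$ returns $(y_1,\dots,y_r)$. Hence $\phi$ has a two-sided inverse (itself), so it is a bijection, and the two sets have equal cardinality. No analytic estimates or recursions are needed. The only ``obstacle'' here is purely notational, namely being careful that the definition of a deletion pattern in the paper allows $y_i=0$ and $y_i=x_i$ (it does, since the constraint is $y_i\in[0,x_i]$), so that the complementing map stays inside $\mathcal{P}_{n-t}(X)$ without any edge cases.
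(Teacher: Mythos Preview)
Your proof is correct and is essentially identical to the paper's own argument: the paper also defines $y_i'=x_i-y_i$, verifies that $\{y_i'\}$ is an $(|X|-t)$-deletion pattern, and observes that this map is reversible. Your additional remark that the map is an involution is a clean way to phrase the reversibility, but the underlying bijection is the same.
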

\begin{proof}
Let $X=S(x_1,\dots,x_r)$ and let $\{y_1,\dots,y_r\}$ be a $t$-deletion pattern.  It follows that $\sum_{i=1}^r y_i = t$.
We define $y_i'=x_i-y_i$ for all $1\le i \le r$. As $y_i\in[0,x_i]$ it follows that $y_i'\in[0,x_i]$ and $\sum_{i=1}^r y_i'=\sum_{i=1}^r (x_i-y_i)=|X|-t$, and so $\{y_i'\}$ is a $(|X|-t)$-deletion pattern of $X$. Each $t$-deletion pattern can be mapped to a $(|X|-t)$-deletion pattern, and this mapping is reversible, thus $|\mathcal{P}_t(X)|=|\mathcal{P}_{|X|-t}(X)|$. 
\end{proof}

\subsection{The number of deletion patterns for balanced strings}
We use the result obtained in Lemma \ref{l:calcP0} and restate it for
deletion patterns to get the following result:
\begin{lemma}\label{l:numDPforB}
$\mathcal |P_t(B_{r,k})|=\sum_{i=0}^{\lfloor\frac{t}{k+1}\rfloor}(-1)^i\binom{r}i\multiset{r}{t-i(k+1)}
= \sum_{i=0}^{\lfloor\frac{t}{k+1}\rfloor}(-1)^i\binom{r}i\binom{r+t-i(k+1)-1}{r-1}$
\end{lemma}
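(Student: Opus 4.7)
The plan is to count $|\mathcal{P}_t(B_{r,k})|$ directly by inclusion-exclusion, reusing the technique already developed in Lemma \ref{l:calcP0}. By definition, a $t$-deletion pattern for $B_{r,k} = S(k,\ldots,k)$ is an ordered tuple $(y_1,\ldots,y_r)$ of integers with each $y_i \in \{0,1,\ldots,k\}$ and $\sum_{i=1}^r y_i = t$, so $|\mathcal{P}_t(B_{r,k})|$ is exactly the number of compositions of $t$ into $r$ nonnegative parts, each of size at most $k$.

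First I would drop the upper-bound constraints and count all nonnegative integer solutions to $\sum y_i = t$, which is the standard $r$-part multiset coefficient $\multiset{r}{t}$. For each index $j$, the set of solutions that violate the constraint $y_j \le k$ (i.e.\ have $y_j \ge k+1$) is counted by the substitution $y_j' = y_j - (k+1)$, yielding $\multiset{r}{t-(k+1)}$ solutions; by symmetry, for any $i$-element subset of indices the number of solutions that simultaneously violate those $i$ constraints equals $\multiset{r}{t-i(k+1)}$, and there are $\binom{r}{i}$ such subsets.

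Assembling these contributions with alternating signs gives
\[
|\mathcal{P}_t(B_{r,k})| = \sum_{i=0}^{\lfloor t/(k+1)\rfloor}(-1)^i\binom{r}{i}\multiset{r}{t-i(k+1)},
\]
where the summation truncates because $\multiset{r}{t-i(k+1)} = 0$ whenever $t - i(k+1) < 0$. The second equality in the lemma is then just the identity $\multiset{r}{m} = \binom{r+m-1}{r-1}$ applied termwise.

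I do not expect any substantive obstacle: the argument is essentially a transcription of Lemma \ref{l:calcP0}, the only difference being that here the forbidden range is $y_i \ge k+1$ (since parts of size exactly $k$ are allowed in a deletion pattern) rather than $y_i \ge k$ as in the composition problem analyzed there. The main thing to double-check is this off-by-one in the upper bound, which accounts for the shift from $k$ to $k+1$ in both the inclusion-exclusion step and the truncation index.
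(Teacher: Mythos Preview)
Your proposal is correct and matches the paper's own argument: the paper simply says to restate Lemma~\ref{l:calcP0} for deletion patterns, which is exactly the inclusion--exclusion count you spell out, with the off-by-one shift from $k$ to $k+1$ since parts may now equal $k$. There is nothing to add.
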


We now study the multiplicative gap between  $|P_t(B_{r,k})|$ and the
previous bounds of \cite{Lev66,Cal69,Hirsch} for values of $t$ {\em
close} to $n/2$ and sufficiently large $r$, $k$.
This is an intriguing setting for $t$ in the context of deletion
channels \cite{Mitz11}.
It follows from basic observations (and also directly from the proof
of Lemma~\ref{l:numDPforB}) that
$$
|P_t(B_{r,k})| \leq \min\left( \multiset{r}{t}, (k+1)^r \right).
$$
The first bound above is exactly that of \cite{Lev66}, while the
second bound follows from the fact that each $y_i$ in a deletion
pattern is an integer between $0$ and $k$ (notice that the former
bound does not depend on the parameter $k$ while the latter does not
depend on $t$).
In what follows, we show that the bound of $(k+1)^r$ improves on the
bounds in \cite{Lev66} and \cite{Cal69,Hirsch} for values of $r$ and
$k$ which are sufficiently large.

For $t=n/2=kr/2$, the bound of $\sum_{i=0}^t \binom{n-t}i$ from
\cite{Hirsch} is exactly $2^{n/2}$.
The bound of  $\binom{r+t-1}{t}$ from \cite{Lev66} is at least
$$
\frac{1}{k}\binom{r(1+k/2)}{r} \geq
\frac{1}{12k\sqrt{r}}\left(\frac{e}{(1+2/k)}\right)^r(1+k/2)^r
$$
Here we use the fact that
{\small{
$$
\binom{r(1+\alpha)}{r} \geq
\frac{1}{12\sqrt{r}} \left(\frac{(r(1+\alpha))^{r(1+\alpha)}}{r^r
(\alpha r)^{\alpha r}}\right) =
\frac{(1+\alpha)^{r}\left(1+1/\alpha\right)^{\alpha r}}{12\sqrt{r}}
$$}}
derived from Stirling's formula; and the fact that for positive $x$,
$(1+1/x)^{x+1} \geq e$.
For $c = \frac{e(1+k/2)}{(k+1)(1+2/k)}$, the above implies that our bound
of $(k+1)^r$ on $|P_t(B_{r,k})|$ is superior to that
given in \cite{Lev66} (and that in \cite{Hirsch}) by a multiplicative factor
of at least
$$
\frac{1}{12k\sqrt{r}}c^r.
$$
Notice that for large $k$, $c>(1+\delta)$ for a constant $\delta >0$.
We conclude that a multiplicative gap of at least that specified above
also holds between $|D_t(B_{r,k})|$ and the bounds in
\cite{Lev66,Cal69,Hirsch}.

For sufficiently small $\epsilon>0$ and $t=n(\frac{1}{2}- \epsilon)$, a similar
analysis will give a gap of $\simeq c^r$ for
$c=\frac{e(1+k/2-\epsilon k)}{(k+1)(1+1/(k/2-\epsilon k))}$.
Here also, for small $\epsilon$ and large $k$;  $c>(1+\delta)$ for a
constant $\delta >0$.
All in all, we get for values $t$ which are {\em close} to $n/2$ and
for sufficiently large $r$ and $k$; that $|P_t(B_{r,k})|$, and thus
our bound of $|D_t(B_{r,k})|$, improves on the bounds  of
\cite{Lev66,Cal69,Hirsch} by an exponential multiplicative factor of
$2^{\Omega(r)}$.


\section{Concluding remarks}
\label{sec:concluding}

In this work we present several operations on binary strings which are monotone with respect to the number of subsequences under deletion.
We show, using the operations studied, that the balanced $r$-run string $B_{r,k}$ and the unbalanced one $U_{n,r}$ obtain the maximum and respectively minimum number of subsequences under deletion.
By devising recursive expressions, we present a precise analysis of the number of subsequences of both $B_{r,k}$ and $U_{n,r}$ under $t$ deletions.
For our lower bound, we quantify our expressions asymptotically.
For our upper bound, we analyze deletion patterns to express our asymptotic improvement over previous bounds.
A direct asymptotic analysis of our expression for $|D_t(B_{r,k})|$ is left open in this work and is subject to future research.


\bibliographystyle{plain}   
\bibliography{yuvref}

\end{document}